\title{Security Analysis of Filecoin's Expected Consensus in the Byzantine vs Honest Model} 
\titlerunning{EC Security Analysis}
\author{Xuechao Wang\footnote{Corresponding author, part of work was done at Protocol Labs.}}{Thrust of Financial Technology, HKUST(GZ), China}{xuechaowang@hkust-gz.edu.cn}{https://orcid.org/0000-0001-6918-2699}{}
\author{Sarah Azouvi}{Protocol Labs, USA}{s.azouvi@gmail.com}{https://orcid.org/0000-0002-7133-1937}{}
\author{Marko Vukolić}{Protocol Labs, USA}{marko.vukolic@protocol.ai}{https://orcid.org/0000-0002-9898-5383}{}
\authorrunning{X. Wang, S. Azouvi and M. Vukolić} 
\keywords{Decentralized storage; Consensus; Security analysis} 
\definecolor{azure}{rgb}{0.54, 0.17, 0.89}
\def\thm@space@setup{\thm@preskip=2pt
\thm@postskip=2pt \itshape}
\newtheoremstyle{newstyle}
{} 
{} 
{\mdseries} 
{} 
{\bfseries} 
{.} 
{ } 
{} 
\theoremstyle{newstyle}
\theoremstyle{definition}
\theoremstyle{remark}
\newcommand{\C}{{\mathcal{C}}}
\newcommand{\target}{\textsf{target}}
\newcommand{\drand}{\textsf{drand}}
\newcommand{\vrfproof}{\textsf{VRF.Proof}}
\newcommand{\vrfverify}{\textsf{VRF.Verify}}
\newcommand{\pk}{\textsf{pk}}
\newcommand{\sk}{\textsf{sk}}
\newcommand{\seed}{\textsf{seed}}
\newcommand{\tipset}{\mathcal{T}}
\newcommand{\chain}{\mathcal{C}}
\newcommand{\block}{\mathcal{B}}
\newcommand{\parent}{\mathsf{parent}}
\newcommand{\epoch}{\mathsf{epoch}}
\newif\iffcsubmission\fcsubmissionfalse
\begin{document}
\def\displayrev{0} 
\maketitle

\begin{abstract}
Filecoin is the largest storage-based open-source blockchain, both by storage capacity (>11EiB) and market capitalization. This paper provides the first formal security analysis of Filecoin's consensus (ordering) protocol, Expected Consensus (EC). 
Specifically,
 we show that EC is secure against an arbitrary adversary that controls a fraction $\beta$ of the total storage for $\beta m< 1- e^{-(1-\beta)m}$, where $m$ is a parameter that corresponds to the expected number of blocks per round, currently $m=5$ in Filecoin.
 We then present an attack, the $n$-split attack, where an adversary splits the honest miners between multiple chains, and
show that it is successful for $\beta m \ge 1- e^{-(1-\beta)m}$, thus proving that $\beta m= 1- e^{-(1-\beta)m}$ is the tight security threshold of EC. This corresponds roughly to an adversary with $20\%$ of the total storage pledged to the chain.
 Finally, we propose two improvements to EC security that would increase this threshold. One of these two fixes is being implemented as a Filecoin Improvement Proposal (FIP).

\end{abstract}

\section{Introduction}
Filecoin is the largest storage-based blockchain in terms of both market cap~\cite{coinmarketcap} and total raw-byte storage capacity (>11EiB)~\cite{filfox}.
In Filecoin, miners, called Storage Providers (SPs),  gain the right to participate in the consensus protocol and to create blocks by pledging  storage capacity to the chain.\footnote{Filecoin further incentivizes the storage of ``useful'' data, where SPs have the additional opportunity to boost their raw-byte storage power, offering deals to verified clients, yielding \emph{quality adjusted} power \cite{miningguide}.} They are in return compensated with a financial reward in the form of newly minted FIL, the cryptocurrency underlying Filecoin, whenever their blocks are included on-chain, where probability of a miner minting new block corresponds to their storage power. The Filecoin consensus mechanism Storage Power Consensus (SPC) consists mainly
of two components: first, a \emph{Sybil-resistance mechanism} that keeps an accurate map of the storage pledged by each storage provider; and second, a consensus protocol that can be run by any set of weighted participants and outputs an ordered list of transactions. 

In this paper, we ignore the mechanisms that keep the mapping between miners and their respective storage accurate (i.e., the Sybil-resistance mechanism and the quality adjusted power policy) and focus on the sub-protocol run by the weighted miners to produce an ordered list of transactions. This sub-protocol is called \textit{Expected Consensus} (EC) and weighs each participant according to their storage power.  We assume that the weighted list of miners is accurately maintained and given as an input to EC.
EC is a longest-chain style protocol~\cite{filecoin} (or, more accurately, a heaviest-chain protocol). At a high level, it operates by running a leader election at every time slot in which, on expectation, $m$ participants may be eligible to submit a block, where $m$ is a parameter currently equal to 5.
Each participant is elected with probability proportional to their weight.
Multiple valid blocks submitted in a given round may form a \emph{tipset}, which is a set of blocks sharing the same height (i.e., round number) and parent tipset. In EC, the blockchain is a chain of tipsets (i.e., a directed acyclic graph [DAG] of blocks) rather than a chain of blocks.
For example, in Figure~\ref{fig:tipset-1} blocks \{A,B,C\}, \{D,E\} and \{F,G,H\} each form a different tipset; and in Figure~\ref{fig:tipset-2} blocks \{A,B,C\}, \{D,E\} and \{F\} each form a different tipset. Every block in a tipset adds weight to its chain of tipsets, while the fork choice rule is to choose the heaviest tipset.
EC works in a very similar fashion as longest-chain protocols like Bitcoin do, but it uses tipsets instead of blocks. EC's security has, until now, only been argued informally, as with Bitcoin in its early days. Intuitively, tipsets make it harder for an adversary with less storage to form a competing chain of tipsets with more blocks than the main chain, as miners can create a number of blocks proportional to their storage power. 
This is analogous to Nakamoto's private attack on the longest-chain protocol~\cite{nakamoto2008peer}.
Specifically, assuming that two competing chains of tipsets are growing, with different amount of storage power producing the two, since in EC more than one block can be appended to a chain at each round, the difference between the number of blocks created on each chain will grow roughly $m$ times faster than in the case without tipset (i.e., where each chain can grow by at most one block at each round). However, this intuitive security justification only applies when examining the private attack, a specific instance, not a general adversary. The lessons learned from the balance attack~\cite{natoli2016balance} on {\sf GHOST}~\cite{sompolinsky2015secure} highlight that a comprehensive analysis encompassing all potential attacks is crucial for assuring the security of a blockchain protocol. This comprehensive evaluation is the main focus of this paper.

In this work, we conduct a formal security analysis of EC and prove that EC is secure against any adversary that owns a fraction $\beta$ of the total storage power for $\beta m< 1- e^{-(1-\beta)m}$ (Section~\ref{sec:proof}). To achieve this, we carefully extend the proof technique developed in~\cite{dembo2020everything} to EC: the key step is to identify the sufficient condition for a block to stay in the chain forever, regardless of the complex DAG structure in EC. Indeed, the incorporation of tipsets introduces substantial complexities to the problem. For instance, in the longest-chain protocol, the chain growth is an independent and identically distributed random variable in each round. Conversely, within EC, the chain growth becomes dependent on the entire history of the blockchain, given that it depends on the structure of the DAG. This increased dependency adds layers of intricacy to the security analysis.
Following similar literature~\cite{dembo2020everything,backbone, kiayias2017ouroboros}, we consider a rather strong adversary, which we specify in Section~\ref{sec:model}, that has ``full control'' over both the network and the tie breaking rule.
We then propose an attack, the $n$-split attack, in Section~\ref{sec:attack} in which an adversary with power $\beta$ such that $\beta m\ge 1- e^{-(1-\beta)m}$ can break the security of EC, effectively proving that the security threshold $\beta m= 1- e^{-(1-\beta)m}$ is tight for EC. 
With current parameters, this means that EC is secure against an adversary that holds roughly 20\% of the total storage power.
In our attack, an elected leader, controlled by the adversary, equivocates by sending different blocks to different miners at each round with the aim to split the honest miners into different chains and thus reducing the weight of each tipsets' chain. While the honest participants are split and all mine on potentially many different forks, the adversary can construct a \emph{private} tipsets' chain on the side, i.e., a chain that does not include any block mined by an honest miner (called for simplicity \emph{honest block}) and that will not be broadcast to any honest miner until the end of the attack.
Finally, in Section~\ref{sec:mitigation}, we propose two countermeasures against this attack aimed at augmenting the security threshold of Filecoin. The first one entails eliminating the concept of tipsets, substituting EC with the longest-chain protocol~\cite{kiayias2017ouroboros,david2018ouroboros,badertscher2018ouroboros}, as our observations suggest that lowering the value of $m$ can enhance the security threshold. The second approach involves the adoption of a consistent or reliable broadcast protocol~\cite{bracha1985asynchronous,guerraoui2019scalable} to prohibit the adversary from equivocating. Our second countermeasure is currently in the process of being implemented as a Filecoin Improvement Proposal~\cite{fip}.

\begin{figure}
\centering
\begin{subfigure}[t]{.4\textwidth}
  \centering
  \includegraphics[width=\linewidth]{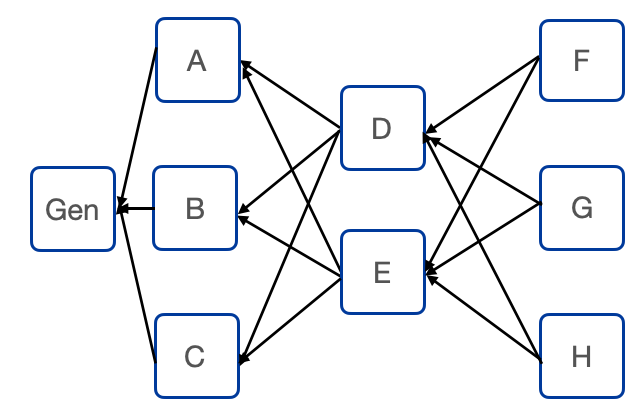}
  \caption{Example of a tipset chain: \{A,B,C\} is the first tipset after Genesis, and the parent tipset of \{D,E\}, which is itself the parent tipset of tipset \{F,G,H\}.}
  \label{fig:tipset-1}
\end{subfigure}%
\hspace{1em}%
\begin{subfigure}[t]{.4\textwidth}
  \centering
    \includegraphics[width=0.6\linewidth]{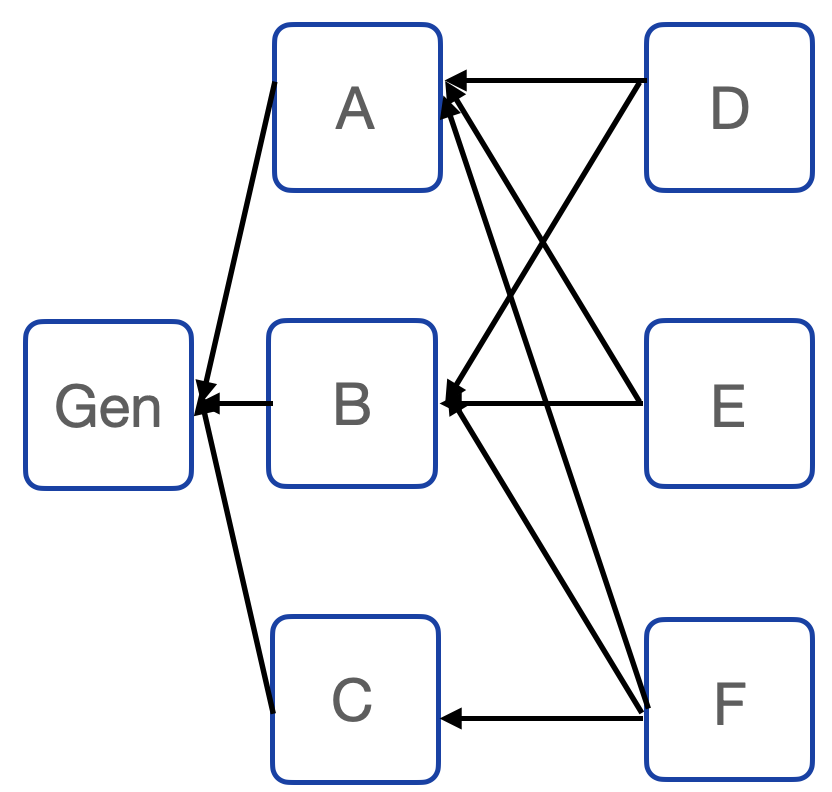}
    \caption{Example of two tipsets with equal weight. Tipset \{D,E\} and tipset \{F\} both have a weight of 5. (roughly, the weight is calculated as the sum of the blocks in all the tipsets, see Section~\ref{sec:fcr} for details).}
  \label{fig:tipset-2}
\end{subfigure}
\caption{Tipset structure in EC.}
\label{fig:tipsets}
\end{figure}

\subsubsection*{Related Work}
This work is directly inspired by the line of work formally analyzing  longest-chain protocols either in the proof-of-work case~\cite{backbone,ren2019analysis,kiffer2018better} or proof-of-stake~\cite{bagaria2019proof,kiayias2017ouroboros,dembo2020everything,pass2017sleepy}.
We adapt the technique in~\cite{dembo2020everything} to account for tipsets instead of blocks (see Lemma~\ref{lem:converge}).
The main difference between using a chain of tipsets and a chain of blocks is that in the tipset case, the number of blocks in the chain at each round can increase by any finite integer value and also depends on the structure of the DAG. By contrast, in the chain-of-blocks case, the number of blocks of any honest chain increases by zero or one at each round. This makes the tipset analysis more complex.

Similar attacks to the one we describe in Section~\ref{sec:attack} were proposed by Bagaria et al.~\cite{bagaria2019proof} in the context of proof-of-stake and by Natoli and Gramoli~\cite{natoli2016balance} in the proof-of-work context. In both these works, the attacks are described on a DAG-based blockchain, where each new block can include any previous block as its parent. We adapt it to the tipset case, which is slightly different: in the case of tipsets, a block may have multiple parents, but only blocks that have themselves the same set of parents can be referenced by the child block.
The idea behind these attacks is to have an adversary publish its blocks in a timely manner on different forks to ensure that honest miners keep on extending two or even more chains instead of one, effectively spreading their \emph{power} (be it stake, computation or storage) on different chains.

\section{Model}
\label{sec:model}
In this section we present our model and assumptions.
\subsection{Participants}
Filecoin requires participants to pledge storage capacity to the chain to be added to the list of participants.
Following the work in~\cite{backbone,kiayias2017ouroboros} we consider a flat model, meaning that each participant accounts for one unit of storage. This could easily be extended to a non-flat model by considering that a participant holding $x$ units of storage controls $x$ ``flat'' participants.
We consider a static model wherein the set of participants is fixed during the execution of the protocol. 
We assume that each participant $i$ possesses a key pair $(\textsf{sk}_i,\textsf{pk}_i)$
and that every participant is aware of the other participants and their respective public keys.

We consider a static adversary that corrupts a fraction $\beta$ of the participants at the beginning of the execution of the protocol. In order to defend against an adaptive adversary who can corrupt honest nodes on the fly (i.e., dynamically, at the start of each round), one can either use key evolving signature schemes~\cite{david2018ouroboros} or checkpointing~\cite{azouvi2022pikachu}. However, in order to keep the problem simple, we do not consider an adaptive adversary in this paper.

\subsection{Network assumptions}\label{sec:model-network}
We consider the lock-step synchronous network model adopted in~\cite{backbone,kiayias2017ouroboros}. Time is divided into synchronized rounds, each indexed with an integer in $\mathbb{N}$. Following Filecoin's terminology~\cite{filecoin}, we refer to each time slot as an \emph{epoch}. Each epoch has a fixed duration (currently 30 seconds in Filecoin). To abstract the underlying peer-to-peer gossip network in Filecoin, we simply assume that all messages sent by honest nodes are broadcast to all nodes and that all honest nodes re-broadcast any message they deliver. All network messages are delivered by the adversary, and we allow the adversary to selectively delay messages sent by honest nodes, with the following restrictions: (i) the messages sent in an epoch must be delivered by the end of the current epoch; and (ii) the adversary cannot forge or alter any message sent by an honest node. The adversary does not suffer any network delay. Note that the adversary can selectively send its message only to a subset of honest nodes. However, due to the re-broadcast mechanism, all honest nodes will receive the message by the end of the next epoch. 

The non-lock-step synchronous model, also known as the $\Delta$-synchronous model, is also frequently employed in blockchain security analysis~\cite{badertscher2018ouroboros,david2018ouroboros,dembo2020everything,pass2017analysis,ren2019analysis}. This model ensures messages sent by honest nodes are delivered within a sliding window of $\Delta$ epochs. While this model might be more suitable for proof-of-work blockchains, where miners persistently mine and broadcast blocks, its application becomes less pertinent for PoS blockchains. In the latter, honest nodes primarily remain dormant, awaiting the epoch boundaries to send messages. Given the efficiency of today's network infrastructure, a 30-second window adopted by EC is quite conservative. Consequently, we find little justification to incorporate the $\Delta$-synchronous model in our EC analysis.

\subsection{Randomness}

\noindent{\bf Random beacon.}
A random beacon~\cite{rabin1983transaction} is a system that emits a random number at regular intervals. 
EC relies on drand~\cite{drand}, a decentralized random beacon, to provide miners a different random number at each epoch. 
This service is run by a set of 16 independent institutions that run a multi-party protocol to output, at regular intervals, a fresh random number.
We assume that this random number is unbiasable (i.e., truly random) and unpredictable before the beginning of the epoch. We also assume that each miner in Filecoin has the same view of each drand output, i.e., that drand is secure. We denote $\drand_i$ the random beacon emitted by drand and used by Filecoin miners at epoch $i$.\\

\noindent{\bf Verifiable Random Function.}
A Verifiable Random Function (VRF) \cite{micali1999verifiable} is a function that outputs a random number in a verifiable way, i.e., everyone can verify that the output is indeed random and was generated correctly.
A VRF is composed of two polynomial-time algorithms: $\vrfproof$ and $\vrfverify$ (we omit the key generation).
$\vrfproof$ takes as inputs a seed $\textsf{seed}$ and a secret key $\textsf{sk}$ and outputs a tuple $(y=G_\sk(\seed),p=\pi_\sk(\seed))$ where $y$ is a random number and $p$ is a proof that can be used to verify the correctness of $y$.
$\vrfverify$ takes as input a tuple $(\seed,y,p)$ and a public key $\textsf{pk}$ and uses $p$ to verify that $y=G_\sk(\seed)$, in which case it outputs 1; otherwise, it outputs 0.
A VRF is correct if:

\begin{enumerate}
    \item if $(y,p) = \vrfproof_\sk(\seed)$, then $\vrfverify_\pk(\seed,y,p)=1$;
    \item for all $(\sk,\seed)$, there is a unique $y$ s.t. $\vrfverify_\pk(\seed,y,\pi_\sk(\seed))=1$;
    \item $G_\sk(\seed)$ is computationally indistinguishable from a random number for any probabilistic polynomial-time adversary.
\end{enumerate}

Throughout the rest of this paper, we assume the existence of a correct VRF.

\section{Filecoin's Expected Consensus (EC)}
Filecoin's consensus protocol, EC,
consists of three main components: a leader election sub-protocol, a mining algorithm and a fork choice rule. Briefly, at the beginning of each epoch, participants will check their eligibility to produce a block by running the leader election. 
If they are elected, they use the fork choice rule to select a tipset and include it as their \emph{parent} before broadcasting their block.
We define the protocol more formally in this section. However, we intentionally omit some details, such as those regarding how participants must continually post proofs related to their pledged storage, as they are not relevant to our analysis. Instead, we assume that all participants continuously maintain one unit of storage.
Furthermore, in practice in Filecoin~\cite{filecoin} a participant with power $x$ that is elected twice in the same epoch will create only one block that \emph{weighs} twice more. This is not relevant to our analysis, so we ignore it and prefer a flat model wherein a participant elected twice simply creates two blocks under two different identities. Such a model will favor an adversary as we illustrate
\iffcsubmission{in Appendix~\ref{sec:attack-discussion}}
\else{in Section~\ref{sec:attack} }\fi
and hence renders our analysis stronger.

Due to space limitations, a pseudocode representation of the algorithms described in this section can be found in Appendix~\ref{app:pseudocode}.

\subsection{Leader Selection Protocol}

EC's leader election is inspired by Algorand's cryptographic sortition~\cite{gilad2017algorand}.
Briefly, the leader selection relies on a Verifiable Random Function (VRF)~\cite{micali1999verifiable} that takes as input the drand output value for that epoch. 
In each epoch, each participant will compute 
$\vrfproof_\sk(\seed)=(y=G_\sk(\seed),p=\pi_\sk(\seed))$ where $\seed$ is the drand value.
If $y$ is below a predefined value $\target$ that is a parameter of the protocol, then that participant is elected leader. Any other participant can then use $p$ in order to verify that the random value $y$ was computed correctly (i.e., $\vrfverify_\pk(\seed,y, p)=1$) and that the participant is indeed an elected leader.
The value of $\target$ is chosen such that on expectation $m$ leaders are elected in each epoch.
$m$ is a parameter of the EC protocol currently set to $m=5$.

Proving that the leader selection mechanism is secure is outside the scope of this paper, as similar results were already proven in, e.g., Algorand~\cite{gilad2017algorand}. Instead, we assume that in each epoch, there is an independent random number of participants that are elected leaders and that the number of leaders in each epoch follows a Poisson distribution of parameter $m$. For a coalition that consists of a fraction $\alpha$ of all the participants, their number of elected leaders in an epoch follows a Poisson distribution of parameter $\alpha\times m$.

\subsection{Block and Tipset Structure}
A block is composed of a header and a payload. The payload includes transactions as well as other messages necessary for maintaining the set of participants up to date. We omit its content in this analysis.

When a participant is elected to create a block, they include in the \emph{header} of the block their proof of eligibility (i.e., the VRF proof), an epoch number (the epoch at which the block was created), a proof of storage called $\textsf{WinningPost}$ to prove that they maintain the storage they have pledged (we omit the details of such proof) and finally a pointer to a set of \emph{parent} blocks. For a block $\block$, we denote $\block.\parent$ its parents set and $\block.\epoch$ its epoch number.
The parents of a block must satisfy certain conditions.
First, they must all be in the same epoch, and that epoch needs to be smaller than the block's epoch.
Second, all parent blocks need to have the same set of parents themselves.
Each set of blocks that are in the same epoch and have the same set of parents is called a \emph{tipset} and denoted $\tipset$.
Formally, a tipset $\tipset$ is a non-empty set of blocks: $\tipset = \{\block_1,\cdots,\block_r\}$, each of which belongs to the same epoch, i.e., $\block_1.\epoch = \cdots = \block_r.\epoch$ and has the same set of parents, i.e., $\forall (\block_i,\block_j)\in\tipset^2:\;\: \block_i.\parent = \block_j.\parent$.
Since all blocks in a tipset have the same parent, we abuse the notation and denote $\tipset.\parent$ to denote the parent of tipset $\tipset$. Similarly, $\tipset.\epoch$ denotes the tipset epoch.
We note that $\tipset.\parent$ is a tipset itself.

Since each block references a set of blocks, a Directed Acyclic Graph (DAG) structure can be inferred from each block or tipset, where the blocks are the vertices and the references to parents are the edges. 
Similarly, the set of tipsets referencing each other as parents form a \emph{chain}. For example, Figure~\ref{fig:tipset-1} represents a chain of 4 tipsets (including the genesis) and a blockDAG of 9 blocks.
Formally, a chain $\chain$ is then a set of ordered tipsets $\chain = \{\tipset_0,\tipset_1,\cdots,\tipset_l\}$ such that $\tipset_i.\parent = \tipset_{i-1}$ for all $i>1$. By convention, we have $\tipset_1.\parent = \tipset_0 = \{\text{Genesis block}\}$ and $\tipset_i = \emptyset$ if there is no block in epoch $i$. We note $\chain[\tipset_i]= \{\tipset_0,\tipset_1,\cdots,\tipset_i\}$.
Similarly, for a tipset $\tipset$, we can infer the associated chain, denoted $\chain[\tipset]$ as follows: $\chain[\tipset] = \{\tipset_0,\cdots,(\tipset.\parent).\parent,\tipset.\parent,\tipset\} $.

\subsection{Fork Choice Rule and Weight Function}\label{sec:fcr}
In order to decide which tipset to include as its parents, EC provides a \emph{weight function} that assigns a weight to different tipsets. The fork choice rule will then consist of choosing the tipset with the heaviest weight.
In practice, EC's weight function~\cite{filecoin} is a complex function of (1) the number of blocks in the chain and (2) the total amount of storage committed to the chain. Moreover, the total amount of storage is taken far in the past to ensure that everyone agrees on it. 
Since in our analysis we assume a static model where the set of participants
is fixed during the execution of the protocol, we only take into consideration
the number of blocks in the chain. We discuss the impact of this simplification in Section~\ref{sec:limitations-weight}.
Formally, for a tipset $\tipset$, we denote its weight $w(\tipset)$ and have:

$$w(\tipset) = \sum_{\tipset_i\in\chain[\tipset] }|\tipset_i|.$$
In the case of a tie between two chains, a deterministic tie-breaker is used. In practice, the tipset that contains the smallest VRF value is chosen. However, in our analysis we consider a powerful adversary that has the power to decide on ties. 
See Figure~\ref{fig:tipset-2} for a visual representation of two tipsets with equal weight.



\subsection{Mining Algorithm}
We describe the mining algorithm that miners in Filecoin run continuously.
At each epoch $i>0$ each participant with key pair $(\textsf{sk},\textsf{pk})$ performs the following:
\begin{enumerate}
    \item Fetch the drand value for epoch $i$ and verify eligibility by checking $$G_\textsf{sk}(\text{drand}_i)\stackrel{?}{\le} \textsf{target},$$
    where $\textsf{target}$ is chosen such that on expectation $m$ leaders are elected (with $m=5$ in the current implementation).
    \item If elected leader, create a block as follows:
    \begin{itemize}
        \item Choose the tipset with the highest weight (i.e., the most blocks) and reference it as the block's parent.
        \item Include a proof of eligibility (i.e., the VRF value: $\vrfproof_\sk(\drand_i)= (y,p)$), a $\textsf{WinningPost}$ to prove storage maintenance, as well as the payload.
    \end{itemize}
    \item Broadcast the block newly created.
\end{enumerate}
In parallel, whenever they receive a new block in epoch $i$, participants verify its validity and, if it is valid, add it to their blockDAG.
A block is valid if and only if:
\begin{enumerate}
    \item The election proof $(y,p)$ is valid, i.e., $\vrfverify_\pk(\drand_i,y,p) =1$ and $y\le \target$.
    \item The $\textsf{WinningPost}$ is valid (details omitted).
    \item All the transactions in the payload are valid (details omitted).
    \item All its parent blocks form a valid tipset, i.e.:
    \begin{itemize}
        \item They all belong to the same epoch.
        \item They all have the same parents.
        \item They are all valid blocks.
    \end{itemize}
\end{enumerate}

We analyze the backbone of EC in a static setting and hence omit some details of the protocol. For example, in practice, the leader election mechanism uses a \emph{lookback parameter}, meaning that only participants who pledged their storage sufficiently in the past are eligible for block creation. Because we consider a flat and static model, these details are not relevant to our analysis.

\section{Security Definitions}

\noindent{\bf Security properties.} We consider the standard security properties  of \emph{robust transaction ledgers} defined for blockchain systems~\cite{backbone,dembo2020everything}. 
We start by defining a transaction ledger and confirmed transactions in the ledger.

\begin{definition}[Transaction ledger generated by a chain $\C$]
Given a chain $\C$, a transaction ledger $\mathcal{L}$ generated by $C$ is a deterministic, totally-ordered and append-only list of transactions. In particular, if $\C_1$ is a prefix of $\C_2$, then $\mathcal{L}_1$ generated by $\C_1$ is a prefix of $\mathcal{L}_2$ generated by $\C_2$.  
\end{definition}
For example, one way to generate a transaction ledger from a chain $\chain$ is to order the transactions from $\chain$ by order of chronological appearance (i.e., epoch number where they appeared in the chain) and lexicographical order. Any deterministic rule is however valid and we leave this unspecified.

\begin{definition}[Confirmed transaction parameterized by $\tau\in\mathbb{R}$]
\label{def:conf}
If a transaction {\sf tx} in the ledger appears in a block which is mined in epoch $j\le i-\tau$, then {\sf tx} is said to be $\tau$-\emph{confirmed} in epoch $i$.
\end{definition}



Our goal is to generate a transaction ledger that satisfies {\em persistence}  and {\em liveness} as defined in~\cite{backbone,dembo2020everything}.
Together, persistence and liveness guarantee a robust transaction ledger; transactions will be adopted to the ledger and be immutable.  

\begin{definition}[Robust transaction ledger from \cite{backbone,dembo2020everything}]
    \label{def:public_ledger}
    A blockchain protocol $\Pi$ maintains a robust transaction ledger if the generated ledger satisfies the following two properties:
    \begin{itemize}
        \item (Persistence) Parameterized by $\tau \in \mathbb{R}$. If a transaction {\sf tx} becomes  $\tau$-confirmed at epoch $i$ in the view of one honest node, then {\sf tx} will be at least $\tau$-confirmed in the same position in the ledger by all honest nodes for every epoch $k\ge i$.
        \item (Liveness) Parameterized by $u \in \mathbb{R}$, if a transaction {\sf tx} is received by all honest nodes at epoch $i$, then after epoch $i+u$ all honest nodes will contain {\sf tx} in the same place in the ledger forever.
    \end{itemize}
\end{definition}

\noindent {\bf Notations.} We then define random variables and stochastic processes of interest and their properties.

Let $\alpha$ and $\beta$ be the collective fraction of storage power controlled by honest nodes and malicious nodes, respectively ($\alpha + \beta = 1$). We follow the notations of~\cite{bagaria2019prism}.
Let $H[r]$ and $Z[r]$ be the number of blocks mined by the honest nodes and by the malicious nodes in epoch $r$, then $H[r]$, $Z[r]$ are independent Poisson random variables with means $(1-\beta)m$ and $\beta m$ respectively~\cite{filecoin} (the value of the $\textsf{target}$ parameter is chosen to ensure this).
In addition, the random variables $\{H[0], H[1], \cdots\}$ and $\{Z[0], Z[1],\cdots\}$ are independent of one another, since the value provided by drand to feed the leader election is random.
We now define the auxiliary random variables $X[r]$ and $Y[r]$ as follows. If at epoch $r$ an honest node mines at least one block (i.e., $H[r] \geq 1$), then $X[r] = 1$ and epoch $r$ is called a \emph{successful} epoch, otherwise $X[r]=0$. 
If at epoch $r$ honest nodes mine exactly one block (i.e., $H[r] = 1$), then $Y[r] = 1$ and epoch $r$ is called a \emph{uniquely successful} epoch, otherwise $Y[r] = 0$. Epoch $r$ is called an \emph{isolated successful} epoch if it further satisfies that there is no honest block in epoch $r-1$ (i.e., $H[r-1]=0$ and $Y[r]=1$).
Further, $X[r_1,r_2]$ and $Y[r_1,r_2]$ are the number of successful and uniquely successful epochs, respectively, in the interval $(r_1,r_2]$, and $H[r_1,r_2]$ and $Z[r_1,r_2]$ are the number of blocks mined by honest nodes and by the adversary respectively in the interval $(r_1,r_2]$. 

In EC, chains may have equal weights. For simplicity and generality, we assume tie-breaking always favors the adversary. This means that the persistence will be broken as long as there are two sufficiently long forks with equal weights.

Given a chain of tipsets $\C$, let $\C[r]$ be the chain truncated up to blocks in epoch $r$. Further, let $w(\C)$ be the weight of $\C$. Let $W_{\max}[r]$ and $W_{\min}[r]$ be the maximum and minimum weights of chains adopted by honest nodes at the end of epoch $r$. Then, by our network model, we have:
\begin{equation}
    W_{\min}[r] \leq W_{\max}[r] \leq W_{\min}[r+1].
\end{equation}
Even if some honest nodes' chains are ``behind'' in epoch $r$, by our re-broadcast mechanism, their view for epoch $r$ will catch up with the rest of the honest nodes in epoch $r+1$. Furthermore, honest participants always extend the heaviest chain they are aware of, hence the inequality above.

We also have the following minimum honest chain growth property, which is essential to our proof. For $t \geq r+1$,
\begin{equation}
\label{eqn:growth}
    W_{\min}[t] \geq W_{\min}[r+1] + X[r+1,t] \geq  W_{\max}[r] + X[r+1,t].
\end{equation}

This inequality again follows from the fact that honest participants always extend the heaviest chain they know of. However, it could be that different honest participants have different views and thus create blocks on different chains, hence why we consider $X$ in the inequality above and not $H$.

\section{Security Proof}\label{sec:proof}
In this section, we prove our main theorem, Theorem~\ref{thm:ec} stated below, parameterized by the security parameter $\kappa$.
The proof will proceed in multiple steps.
\iffcsubmission{Due to space limitations, all the proofs appear in the appendix.}\fi
We extend the technique of Nakamoto blocks developed in \cite{dembo2020everything}. We first define the notion of Nakamoto epochs in EC and prove that the honest blocks mined in Nakamoto epochs remain in the heaviest chain forever. Then we show that Nakamoto epochs exist and appear frequently regardless of the adversarial strategy. Straightforwardly, the protocol satisfies liveness and persistence: transactions can enter the ledger frequently through the Nakamoto epochs, and once they enter, they remain at a fixed location in the ledger.

\begin{theorem}
    \label{thm:ec}
    If $\beta m < 1-e^{-(1-\beta)m}$, then EC generates a robust transaction ledger that satisfies {\em persistence} (parameterized by $\tau=\kappa$) and {\em liveness} (parameterized by $u=\kappa$) in Definition~\ref{def:public_ledger} with probability at least $1-e^{-\Omega(\kappa^{1-\epsilon})}$, for any $0 < \epsilon <1$.
\end{theorem}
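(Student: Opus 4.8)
The plan is to follow the Nakamoto-block strategy of \cite{dembo2020everything}, adapted to tipsets. First I would isolate a \emph{sufficient combinatorial condition} for an honest block to stay in the heaviest chain forever. In the longest-chain setting, a ``loner'' honest block at a uniquely successful round that dominates the adversary forever is such a witness; here the DAG and tipsets complicate this, so I would work with \emph{isolated successful epochs} (as already defined in the excerpt: $H[r-1]=0$ and $Y[r]=1$). Concretely, I would call epoch $r$ a \emph{Nakamoto epoch} if it is isolated successful and, for all $r_1 < r \le r_2$, the honest advantage $X[r_1,r_2]$ strictly exceeds the total number of adversarial blocks $Z[r_1,r_2]$ available to ``rewrite'' that interval — something like $X(r_1,r_2] > Z(r_1,r_2]$ for all such $r_1,r_2$ (the ``$+1$'' slack coming from the isolation). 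The key lemma (the analogue of Lemma~\ref{lem:converge}, ``converge'') is: the unique honest block of a Nakamoto epoch $r$ lies in $\C[t]$ for the heaviest chain $\C$ of every honest node at every epoch $t \ge r$, and moreover its position (epoch and ordering) never changes. The proof of this lemma is where the tipset structure bites: I must argue that any competing chain forking before $r$ and rejoining (or overtaking) after $r$ would need weight exceeding $W_{\min}$ along the honest chain, but by the minimum-growth property \eqref{eqn:growth} the honest chain gains at least $X[r+1,t]$ while any alternative chain not containing the epoch-$r$ honest block can contain at most the adversarial blocks plus the honest blocks of the ``wrong'' side — and crucially, because of the isolation of epoch $r$, the honest block of epoch $r$ cannot be ``absorbed'' into a competing tipset at the same height sharing a different parent.

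Second, once the converge lemma is in place, I would show Nakamoto epochs \emph{exist and recur frequently}. This reduces to a random-walk / balls-in-bins estimate on the i.i.d.\ increments $X[r] - Z[r]$-type quantities. The drift is $\mathbb{E}[X[r]] - \mathbb{E}[Z[r]] = (1 - e^{-(1-\beta)m}) - \beta m$, which is strictly positive exactly under the theorem's hypothesis $\beta m < 1 - e^{-(1-\beta)m}$; here is where the specific Poisson functional $1 - e^{-(1-\beta)m} = \Pr[H[r]\ge 1]$ enters, since $X[r]=\mathbbm{1}\{H[r]\ge 1\}$. Using this positive drift together with a standard supermartingale / Azuma–Hoeffding argument (as in \cite{dembo2020everything}), I would show that in any window of length $\Theta(\kappa)$ epochs, with probability $1 - e^{-\Omega(\kappa^{1-\epsilon})}$ there is a Nakamoto epoch, and more precisely that the ``last'' Nakamoto epoch before any given epoch is at most $O(\kappa)$ epochs in the past. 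I should also separately verify that isolated successful epochs themselves occur with constant probability per epoch (true since $\Pr[H[r-1]=0]\cdot\Pr[H[r]=1] = e^{-(1-\beta)m}\cdot (1-\beta)m\, e^{-(1-\beta)m} > 0$), so conditioning on isolation does not destroy the drift argument — a mild independence-handling point because of the one-epoch overlap in the isolation condition, which I would handle by grouping epochs in pairs or by a union bound.

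Third, I would assemble persistence and liveness. For \emph{liveness} with parameter $u = \kappa$: a transaction received by all honest nodes at epoch $i$ is included by the next honest block, and there is a Nakamoto epoch within $O(\kappa)$ epochs after $i$ whose honest block (by the converge lemma) is permanently in every honest chain and carries that transaction (or a block ordered before the transaction's guaranteed-inclusion point); hence after $i + \kappa$ the transaction sits at a fixed ledger position forever. For \emph{persistence} with parameter $\tau = \kappa$: if $\mathsf{tx}$ is $\kappa$-confirmed at epoch $i$ in some honest view, it sits in a block at depth corresponding to a Nakamoto epoch $\le i - \kappa$ (or is ``protected'' by one), so by the converge lemma every honest node at every later epoch has that block in the same position, giving immutability. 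The probability bound $1 - e^{-\Omega(\kappa^{1-\epsilon})}$ comes from the union over the $O(\kappa)$-or-so relevant epochs of the $e^{-\Omega(\kappa^{1-\epsilon})}$ tail from the random-walk estimate (the $\kappa^{1-\epsilon}$ rather than $\kappa$ exponent being the usual artifact of bounding rare long ``catch-up'' excursions).

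The main obstacle I anticipate is the converge lemma itself — specifically, formulating the right Nakamoto-epoch condition so that it is both (a) strong enough to guarantee permanence despite the fact that the honest chain can grow by an arbitrary integer amount per epoch and the competing chain is a tipset DAG, and (b) weak enough to occur frequently under the stated threshold. The subtlety is that an adversary's blocks can be reused across many sibling tipsets and the honest "chain growth" variable is history-dependent (unlike the i.i.d.\ block count in Bitcoin), so I expect to need the comparison to be in terms of $X$ (successful epochs) rather than $H$ (honest blocks) throughout, and to carefully rule out the tipset-specific attack where an honest block of epoch $r$ ends up in a tipset with a different parent on a heavier competing chain — which is exactly what the isolation requirement $H[r-1]=0$ is designed to prevent, and what the $n$-split attack in Section~\ref{sec:attack} exploits when the threshold is violated.
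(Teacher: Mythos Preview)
Your proposal is correct and follows essentially the paper's approach: Nakamoto epochs are defined as isolated successful epochs satisfying a two-sided domination condition (the paper's precise form is $Z[r-1,t] < X[r+1,t]$ for all $r \le s-2$ and $t \ge s$, the two-epoch offset absorbing the adversary's ``headstart'' at the fork point), the converge lemma is proved by contradiction using the chain-growth inequality~\eqref{eqn:growth}, and persistence/liveness follow once Nakamoto epochs are shown to recur in every length-$\kappa$ window. The one refinement to flag is that the recurrence bound is \emph{not} a direct Azuma--Hoeffding consequence (the events $G_s$ are heavily correlated through their shared unbounded future $t\ge s$); instead the paper first shows $P(G_s)\ge p>0$ uniformly, then partitions $(j,j+k]$ into $\sqrt{k}$ sub-intervals to manufacture independence among ``short-range'' catch-up events (yielding $e^{-\Omega(\sqrt{k})}$), and finally bootstraps this recursively to obtain $e^{-\Omega(k^{1-\epsilon})}$ --- which is the actual origin of the $\kappa^{1-\epsilon}$ exponent, not a union bound over epochs.
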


\subsection{Nakamoto epochs}

Let us define the events: 
\begin{equation*}
    E_{rs} =\{ \mbox{event that $Z[r-1,t]  < X[r+1,t]$ for all $t \geq s$}\},
\end{equation*}

\begin{equation*}
    F_s = \bigcap_{0 \leq r \leq s-2} E_{rs},
\end{equation*}

\begin{equation*}
    U_s =\{ \mbox{event that epoch $s$ is an isolated successful epoch}\} = \{H[s-1]=0, Y[s] = 1\},
\end{equation*}
and
\begin{equation*}
    G_s = F_s \cap U_s.
\end{equation*}

We will call epoch $s$ a \emph{Nakamoto epoch} if the event $G_s$ occurs. And we have the following lemma.

\begin{lemma}
\label{lem:converge}
If epoch $s$ is a Nakamoto epoch, then the unique honest block mined in epoch $s$ is contained in any future chain $\C[t]$, $t \geq s$.
\end{lemma}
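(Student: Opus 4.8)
The goal is to show that the unique honest block $\block_s$ mined in a Nakamoto epoch $s$ lies on every honest chain $\C[t]$ for all $t \geq s$. The plan is to argue by contradiction: suppose some honest chain $\C[t]$ (with $t \geq s$) does not contain $\block_s$. I will track the weight of $\C[t]$ around epoch $s$ and compare it against the weight of the heaviest honest chain, which, by the minimum honest chain growth property~\eqref{eqn:growth}, grows by at least $X[r+1,t]$ from any epoch $r$ up to $t$. The idea is that because $s$ is an \emph{isolated} successful epoch ($H[s-1]=0$, $Y[s]=1$), the honest block $\block_s$ is mined on top of a tipset that every honest node agrees on at the end of epoch $s-1$; so any competing chain $\C[t]$ avoiding $\block_s$ must ``branch off'' at some epoch $r \leq s-1$, i.e.\ share the prefix $\C[r]$ with the honest view but diverge afterward.

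\textbf{Key steps.} First, I would make precise the branching point: if $\C[t]$ does not contain $\block_s$, let $r$ be the largest epoch $\leq s-1$ such that $\C[t]$ agrees with $W_{\min}[r]$'s chain up to epoch $r$ (this uses $H[s-1]=0$ to ensure there is a clean tipset at epoch $s-1$ shared by all honest nodes, or else one can take $r = s-1$ directly). Second, I would lower-bound the weight that $\block_s$ contributes to the honest side: at the end of epoch $s$, the heaviest honest chain has weight at least $W_{\max}[r] + X[r+1,s] \geq W_{\max}[r] + 1$, where the extra $+1$ comes from $\block_s$ being the unique honest block of epoch $s$ (an isolated, uniquely-successful epoch), and more generally for $t \geq s$ we get $W_{\min}[t] \geq W_{\max}[r] + X[r+1,t]$. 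Third, I would upper-bound the weight of the competing chain $\C[t]$: every block on $\C[t]$ in epochs $r+1,\dots,t$ that is \emph{not} on the honest chain must be an adversarial block (since $r$ was chosen as the last common epoch and $s$ is isolated), so $w(\C[t]) \leq W(\text{common prefix at }r) + Z[r-1,t]$ — here I would need to be slightly careful about why $Z[r-1,t]$ rather than $Z[r+1,t]$ is the right bound, which stems from adversarial blocks possibly starting to diverge as early as epoch $r-1$ given how tipsets reference parents (all parents share a parent set, so the fork can be seeded one epoch earlier). Fourth, combining, if $\C[t]$ were adopted by an honest node at epoch $t$ then $w(\C[t]) \geq W_{\min}[t]$, giving $Z[r-1,t] \geq X[r+1,t]$, which contradicts the event $E_{rs}$ (hence $F_s$, hence $G_s$) since $0 \leq r \leq s-2$; the boundary case $r=s-1$ needs to be handled separately using the isolation condition $H[s-1]=0$ directly.

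\textbf{Main obstacle.} The delicate part is the bookkeeping of tipset weights across the branching point — specifically, nailing down exactly which epoch the adversarial contribution ``starts counting'' from, and why the shift to $Z[r-1,t]$ (rather than $Z[r,t]$ or $Z[r+1,t]$) is forced by the tipset parent-consistency rule. Because in EC a block's parents must all share the same parent set, a competing tipset at epoch $r+1$ can legitimately be built on an epoch-$r$ tipset that mixes honest and adversarial epoch-$r$ blocks, so the adversary may ``recycle'' honest weight from epoch $r$ and only needs fresh blocks from epoch $r-1$ onward to actually fork; making this argument airtight — including ruling out that the competing chain reuses $\block_s$'s tipset or any honest block from epoch $s$ — is where the real work lies, whereas the weight inequalities themselves are then routine arithmetic with~\eqref{eqn:growth} and the definitions of $X$, $Y$, $Z$.
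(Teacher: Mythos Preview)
Your overall strategy --- contradiction via a weight comparison using the growth property~\eqref{eqn:growth} and the event $F_s$ --- matches the paper's, but your choice of the pivot epoch $r$ is wrong, and this is where the argument breaks.

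You take $r$ to be the ``last common epoch,'' i.e.\ the largest epoch at which $\C[t]$ still agrees with some honest chain's prefix, and then assert that every block on $\C[t]$ in epochs $r+1,\dots,t$ must be adversarial. That does not follow. Honest nodes can be split across forks: an honest miner that was shown a different epoch-$(r{+}1)$ tipset can place an honest block on $\C[t]$'s branch well after your fork point, so the upper bound $w(\C[t]) \leq (\text{common prefix weight}) + Z[r-1,t]$ is unjustified under your definition. Your parenthetical justification ``since $r$ was chosen as the last common epoch and $s$ is isolated'' does not help --- isolation of $s$ says nothing about honest blocks landing on $\C[t]$ at epochs in $(r,s-2]$ or in $(s,t]$.

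The paper fixes this by choosing $r$ differently: it first takes $t$ to be the \emph{smallest} epoch at which some honest chain $\C[t]$ omits $b_s$, and then lets $r$ be the epoch of the \emph{last honest block} $b_r$ on $\C[t]$. With this choice, everything on $\C[t]$ after $b_r$ is adversarial \emph{by definition}, and since $b_r$'s parent tipset was selected by an honest miner, the prefix through epoch $r$ has weight at most $W_{\max}[r]$; the bound $w(\C[t]) \leq W_{\max}[r] + Z[r-1,t]$ is then immediate (the adversary's own epoch-$r$ blocks sitting in $b_r$'s tipset are the headstart that forces $Z[r-1,\cdot]$ rather than $Z[r,\cdot]$ --- not, as you suggest, blocks from epoch $r-1$). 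Minimality of $t$ is what forces $r \leq s$: if $r>s$, the prefix of $\C[t]$ up to $b_r$'s parent would be an earlier honest chain omitting $b_s$, contradicting minimality. Then $r \notin \{s-1,s\}$ by the isolation condition, so $r \leq s-2$ automatically and $E_{rs}$ applies --- there is no $r=s-1$ boundary case to handle. Your need for a separate treatment there is a symptom of having chosen the wrong pivot, not a loose end in an otherwise sound plan.
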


\begin{proof}
Let $b_s$ be the unique honest block mined in epoch $s$. We will argue by contradiction. Suppose $G_s$ occurs and let $t \geq s$ be the smallest $t$ such that $b_s$ is not contained in $\C[t]$, an honest chain adopted by some honest node at the end of epoch $t$. 
Let $b_r$, mined in epoch $r$, be the last honest block on $\C[t]$ (which must exist, because the genesis block is by definition honest). If $r > s$, then $\C[r-1]$ is the prefix of $\C[t]$ before block $b_r$, and does not contain $b_s$ (because $\C[r-1]$ is a prefix of $\C[t]$) contradicting the minimality of $t$. So $b_r$ must be mined before or in epoch $s$. Since epoch $s$ is an isolated successful epoch, we further know that $r \leq s-2$. The part of $\C[t]$ after block $b_r$ must consist of all malicious blocks by the definition of $b_r$. Note that this may also include malicious blocks in epoch $r$ (i.e., \textit{headstart} of the adversary). Hence, we have an upper bound for the weight of $\C[t]$.
\begin{equation}
    w(\C[t]) \leq W_{\max}[r] + Z[r-1,t] < W_{\max}[r] + X[r+1,t],
\end{equation}
where the first inequality is illustrated in Figure~\ref{fig:bridge}, and the second inequality follows from the fact that event $F_s$ occurs. We also have a trivial lower bound: $w(\C[t]) \geq W_{\min}[t]$. Therefore, we have
\begin{equation}
    W_{\min}[t] <  W_{\max}[r] + X[r+1,t],
\end{equation}
which contradicts the minimum honest chain growth property (Eqn.~\ref{eqn:growth}).
\end{proof}

\begin{figure}
    \centering
    \includegraphics[width=0.85\textwidth]{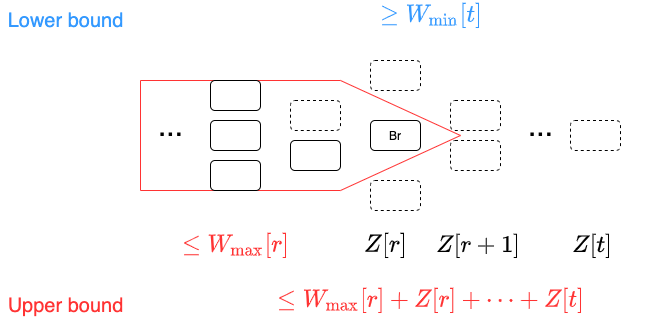}
    \caption{Upper bound and lower bound of the weight of $\C[t]$ in the proof of Lemma~\ref{lem:converge}. Blocks with dotted lines are adversarial blocks. The parent links are omitted for readability; each block has all blocks from the previous epoch as parents.}
    \label{fig:bridge}
\end{figure}

Note that Lemma~\ref{lem:converge} implies that if $G_s$ occurs, then the entire chain leading to the unique honest block mined in epoch $s$ from the genesis is stabilized after epoch $s$.

\subsection{Occurrence of Nakamoto epochs}
Although the existence of Nakamoto epochs ensures that the block at this epoch will be finalized, i.e., it will appear in every honest future chain, the question now remains whether Nakamoto epochs exist at all and, if so, at what frequency they appear. We start answering this question by proving in the next lemma that Nakamoto epochs have a strictly positive probability of happening, i.e., the probability of each epoch being a Nakamoto epoch is strictly positive.
\iffcsubmission{Due to space limitations, the proof is left to Appendix~\ref{app:proofprobanakamoto}.}\fi

\begin{lemma}
\label{lem:infinite_many_G}
If  $\beta m < 1-e^{-(1-\beta)m}$, then there exists $p > 0$ such that  $P(G_s) \ge p$ for all $s$.
\end{lemma}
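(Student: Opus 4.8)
\textbf{Proof proposal for Lemma~\ref{lem:infinite_many_G}.}

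The plan is to lower-bound $P(G_s) = P(F_s \cap U_s)$ uniformly in $s$ by a positive constant. Since $G_s = F_s \cap U_s$, I would first bound $P(G_s) \ge P(U_s) - P(U_s \setminus F_s) \ge P(U_s) - P(\overline{F_s})$, or more cleanly use $P(F_s \cap U_s) \ge P(U_s) - P(\overline{F_s})$. The event $U_s = \{H[s-1]=0, Y[s]=1\}$ depends only on the Poisson variables $H[s-1]$ and $H[s]$, which are independent, so $P(U_s) = P(H[s-1]=0)\,P(H[s]=1) = e^{-(1-\beta)m}\cdot (1-\beta)m\, e^{-(1-\beta)m}$, a positive constant independent of $s$. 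The crux is therefore to show that $P(\overline{F_s})$ can be made smaller than $P(U_s)$ — but this will not work directly, since $P(\overline{F_s})$ need not be small. Instead I expect the right approach is to show that $F_s$ and $U_s$ are \emph{positively correlated} or to directly estimate $P(F_s)$ bounded below by a constant and then argue the intersection is still constant-probability; more precisely, I would condition on $U_s$ and show $P(F_s \mid U_s) \ge p'$ for some constant $p' > 0$.

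The main work is bounding $P(\overline{E_{rs}})$ and summing over $r$. Recall $E_{rs} = \{Z[r-1,t] < X[r+1,t] \text{ for all } t \ge s\}$. The drift condition is exactly the hypothesis: $E[X[r+1,t]]$ grows at rate $P(H \ge 1) = 1 - e^{-(1-\beta)m}$ per epoch, while $E[Z[r-1,t]]$ grows at rate $\beta m$ per epoch, and $\beta m < 1 - e^{-(1-\beta)m}$ gives a positive drift gap $\delta := (1-e^{-(1-\beta)m}) - \beta m > 0$ for the random walk $X[r+1,t] - Z[r-1,t]$. The walk starts with a "headstart" deficit of order $t-r$ near $t = s$ when $r$ is close to $s$, but the key point is that $F_s$ only requires $E_{rs}$ for $0 \le r \le s-2$, and for each such $r$ the violation probability $P(\overline{E_{rs}})$ is controlled by a random-walk crossing estimate. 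I would use a standard exponential (Chernoff/Azuma or exponential martingale) tail bound: for a random walk with positive drift $\delta$, the probability it ever drops below its starting level within a window, or ever becomes $\le 0$ after starting at level $\ge k$, decays geometrically in $k$ — something like $P(\exists t \ge s: X[r+1,t] \le Z[r-1,t]) \le C\rho^{\,s-r}$ for constants $C$ and $\rho \in (0,1)$ depending only on $\beta, m$. Summing the geometric series $\sum_{r=0}^{s-2} C\rho^{\,s-r} \le C\rho^2/(1-\rho)$, which is a constant, and by taking this constant strictly less than $P(U_s)$ — which may require a more careful split, handling the few values of $r$ nearest to $s$ (say $r \in \{s-2, s-3, \dots\}$ down to some fixed offset) separately and using that $U_s$ forces $H[s-1] = 0$, which already helps the walk at the relevant early epochs — I conclude $P(F_s \cap U_s) \ge p > 0$.

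Concretely, I would organize it as: (i) compute $P(U_s) = (1-\beta)m\, e^{-2(1-\beta)m} =: q > 0$; (ii) establish the exponential decay bound $P(\overline{E_{rs}}) \le C\rho^{\,s-1-r}$ for all $0 \le r \le s-2$, via an exponential supermartingale argument on the increments $X[k] - (\text{indicator-valued part of } Z)$ (being careful that $Z[r-1,t]$ includes the epoch-$(r-1)$ headstart, so the walk is offset but the drift is unchanged); (iii) sum to get $\sum_{r=0}^{s-2} P(\overline{E_{rs}}) \le \frac{C\rho}{1-\rho}$, a constant $< \infty$; (iv) if this constant is already $< q$, conclude $P(G_s) \ge P(U_s) - \sum_r P(\overline{E_{rs}}) \ge q - \frac{C\rho}{1-\rho} > 0$; otherwise, refine by conditioning on $U_s$ and noting that $H[s-1]=0$ removes part of the adversary's effective headstart, tightening the bound on $P(\overline{E_{rs}} \mid U_s)$ for the finitely many $r$ near $s$ that dominate the sum, so that the conditional sum becomes $< 1$ and hence $P(F_s \mid U_s) \ge p'' > 0$, giving $P(G_s) \ge q\, p'' =: p > 0$.

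\textbf{Main obstacle.} The delicate point is that $\overline{F_s}$ is a union over $r$ up to $r = s-2$, and the terms with $r$ close to $s$ are \emph{not} small — the walk has had almost no time to overcome its headstart. The resolution must exploit that these near-$s$ epochs are precisely where the conditioning on $U_s$ (namely $H[s-1]=0$ and the fresh isolated success at $s$) bites: the adversary's headstart $Z[r-1,t]$ for $r$ near $s$ is competing against $X[r+1,t]$, and the event $U_s$ guarantees the honest side wins at least the epoch-$s$ step cleanly while denying the adversary an epoch-$(s-1)$ honest block to ride on. Making this trade-off quantitative — showing the conditional failure probability summed over the "dangerous" range of $r$ stays below $1$ — is the heart of the argument and where I would spend the most care.
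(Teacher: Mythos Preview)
Your scaffolding is right --- compute $P(U_s) = (1-\beta)m\,e^{-2(1-\beta)m}$ and then lower-bound $P(F_s \mid U_s)$ --- but the union-bound step fails and your proposed fix does not rescue it. Take $r = s-2$: the bad event $\overline{E_{rs}}$ contains $\{Z[s-3,s] \ge X[s-1,s]\}$, and conditioning on $U_s$ gives $X[s-1,s] = 0+1 = 1$ while leaving $Z[s-3,s]$ (a sum over three independent Poisson$(\beta m)$ epochs) untouched. Thus $P(\overline{E_{s-2,s}}\mid U_s) \ge 1-e^{-3\beta m}$, which near the threshold (e.g.\ $m=5$, $\beta\approx 0.196$) is about $0.95$. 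A \emph{single} term in your sum is already close to $1$, so neither $\sum_r P(\overline{E_{rs}}) < q$ nor $\sum_r P(\overline{E_{rs}}\mid U_s) < 1$ is attainable. Your intuition that ``$H[s-1]=0$ removes part of the adversary's effective headstart'' is backwards: $H[s-1]=0$ constrains honest blocks, not adversarial ones, and forces $X[s-1]=0$, which if anything \emph{hurts} the comparison $X[r+1,t] > Z[r-1,t]$.

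The paper circumvents this by introducing a further conditioning event $M_n = \{Z[-n-1,n]=0\}$ (after extending the process to a two-sided stationary version and reducing to $s=0$ via $\hat{G}_s \subset G_s$). Conditionally on $U_0\cap M_n$, every ``near'' term $\hat{B}_{rt}$ with both $r$ and $t$ in the window $[-n,n]$ has probability exactly zero, because the adversary has no blocks there while $X[0]=1$. The remaining terms (with $r<-n$ or $t>n$) are then handled by the drift and decay geometrically, with total mass $\bar{b}_n+\bar{c}_n\to 0$ as $n\to\infty$. Choosing $n$ large gives $P(\hat{F}_0\mid U_0,M_n)>0$, hence $P(\hat{F}_0\mid U_0) \ge P(\hat{F}_0\mid U_0,M_n)\,P(M_n\mid U_0) > 0$. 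The device you are missing is precisely this additional conditioning on an adversary-free window: it converts the troublesome near-$s$ terms from ``large but finitely many'' into ``identically zero'', at the cost of only a positive constant factor $P(M_n\mid U_0)$ --- which is exactly the kind of trade one can afford when the target is merely $P(G_s)>0$ rather than $P(G_s)$ close to $1$.
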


\begin{proof}
Let $1- e^{-(1-\beta)m} = (1+\varepsilon) \beta m$ for some $\varepsilon >0$. The random processes of interest start from epoch $0$. To look at the system in stationarity, let us extend them to $-\infty < r < \infty$. This extension allows us to extend the definition of $E_{rs}$ to all $r,s$, $-\infty < r< s < \infty$, and define $\hat{F}_s$ and $\hat{G}_s$ to be:
$$ \hat{F}_s = \bigcap_{r \leq s-2} E_{rs}$$ and 
$$ \hat{G}_s = \hat{F}_s \cap U_s.$$

Note that $\hat{G}_s \subset G_s$, so to prove that $P(G_s) \ge p$ for all $s$, it suffices to prove that $P(\hat{G}_s) \ge p$ for all $s$. This is proved next.

Let $\mathcal{W}_r = (H[r],Z[r])$. Let's note that the variables $\{\mathcal{W}_r\}$'s are independent from each other, hence $\hat{F}_s \cap U_s = \bigcap_{r \leq s-2} (E_{rs} \cap U_s)$ has a time-invariant dependence on $\{\mathcal{W}_r\}$, which means that $P(\hat{G}_r)$ does not depend on $r$. Then it suffices to show $P(\hat{G}_0) > 0$.

\begin{align*}
    P(\hat{G}_0) &= P(\hat{F}_0|U_0)P(U_0 )   \\
    &=  P(\hat{F}_0|U_0)P(H[-1]=0)P(H[0]=1) \\
    &= (1-\beta)m e^{-2(1-\beta)m} P(\hat{F}_0|U_0).
\end{align*}

Then, it remains to show that $P(\hat{F}_0|U_0)>0$.

Recall that
\begin{equation*}
    \hat{F}_0 = \mbox{event that $Z[r-1,t]  < X[r+1,t]$ for all $t \geq 0$ and $r \leq -2$.}
\end{equation*}
Let 
\begin{equation*}
    \hat{B}_{rt} = \mbox{event that $Z[r-1,t]  \geq X[r+1,t]$},
\end{equation*}
then
\begin{equation*}
    \hat{F}_0^c = \bigcup_{t \geq 0, r \leq -2}  \hat{B}_{rt}.
\end{equation*}
Let us fix a particular integer $n > 2$, and define:
\begin{equation*}
    M_n = \mbox{event that $Z[-n-1,n]=0$}.
\end{equation*}
Then 
\begin{eqnarray}
P(\hat{F}_0 | U_0) & \ge & P(\hat{F}_0|U_0,M_n)P(M_n|U_0) \nonumber\\
& = & \left ( 1 - P(\cup_{t \geq 0,r \leq -2} \hat{B}_{rt}|U_0,M_n) \right) P(M_n|U_0) \nonumber\\
& \ge & \left ( 1 - \sum_{t \geq 0,r \leq -2} P(\hat{B}_{rt}|U_0,M_n) \right) P(M_n|U_0) \nonumber\\
& \ge &  ( 1 - a_n - b_n - c_n) P(M_n|U_0), \label{eqn:up_bound}
\end{eqnarray}
where
\begin{eqnarray*}
a_n & := & \sum_{(r,t): -n \leq r \leq -2 \text{~and~}  0 \leq t \le n} P(\hat{B}_{rt}|U_0,M_n),\\
b_n & := & \sum_{(r,t):  r \leq -2 \text{~and~} t > n }P(\hat{B}_{rt}|U_0,M_n), \\
c_n & := & \sum_{(r,t):  r < -n \text{~and~} t \geq 0 }P(\hat{B}_{rt}|U_0,M_n).
\end{eqnarray*}

Note that the cases where $t>n$ and $r<-n$ are counted twice in $b_n$ and $c_n$, but this is fine because we only need a lower bound. Next we will bound $P(\hat{B}_{rt}|U_0, M_n)$. Consider three cases:

\noindent {\bf  Case 1:} $-n \leq r \leq -2 \text{~and~}  0 \leq t \le n$:
\begin{eqnarray*}
&&P(\hat{B}_{rt}|U_0, M_n) \\
&=& P(Z[r-1,t]  \geq X[r+1,t] | H[0] = 1, H[-1] = 0, Z[-n-1,n]=0) \\
&=& P(X[r+1,t] \leq 0 | H[0] = 1, H[-1] = 0, Z[-n-1,n]=0) \\
&\leq& P(X[0] \leq 0 | H[0] = 1, H[-1] = 0, Z[-n-1,n]=0) \\
&=& 0.
\end{eqnarray*}
Summing these terms, we have $a_n = 0$.

\noindent {\bf Case 2:} $r \leq -2 \text{~and~} t > n$:

\begin{eqnarray*}
&&P(\hat{B}_{rt}|U_0, M_n) \\
&=& P(Z[r-1,t]  \geq X[r+1,t] | H[0] = 1, H[-1] = 0, M_n) \\
&\leq& P((Z[r-1,-2] + Z[n,t]) \geq (X[r+1,-2] + X[0,t] + 1)) \\
&\leq& P((Z[r-1,-2] + Z[n,t]) \geq (X[r+1,-2] + X[0,t])) \text{//(*)} \\
&\leq& P(Z[r+3,t] \geq X[r+3,t]) \text{~~~~~~~//$n>1$ and $Z$'s and $X$'s are i.i.d r.v.s. }\\
&\leq& P(Z[r+3,t] \geq X[r+3,t],X[r+3,t] \geq (1-\varepsilon/4)(1- e^{-(1-\beta)m})(t-r-3)) \\
&~&~~ + P(Z[r+3,t] \geq X[r+3,t], X[r+3,t]  < (1-\varepsilon/4)(1- e^{-(1-\beta)m})(t-r-3)) \\
&\leq& P(Z[r+3,t] \geq (1+\varepsilon/4)\beta m(t-r-3)) \\
&~&~~ + P(X[r+3,t] < (1-\varepsilon/4)(1- e^{-(1-\beta)m})(t-r-3)) \\
&\leq& A_0 e^{-\alpha_0 \varepsilon^2(t-r-3)}
\end{eqnarray*}
for some positive constants $A_0, \alpha_0$ independent of $n,r,t$. 
Inequality (*) follows from counting $t-r-3$ epochs on the right hand side of the equation and $\leq t-r-3$ epochs on the left hand side.
The last two inequalities follow from Lemma~\ref{lem:poisson}, Lemma~\ref{lem:chernoff} and the facts that $1- e^{-(1-\beta)m} = (1+\varepsilon) \beta m$ and $\frac{1+\varepsilon/4}{1-\varepsilon/4} < 1+\varepsilon$.
Summing these terms, we have:
\begin{eqnarray*}
b_n & = & \sum_{(r,t):  r \leq -2 \text{~and~} t > n}P(\hat{B}_{rt}|U_0,M_n) \\
& \leq  & \sum_{ (r,t): r \leq -2 \text{~and~} t > n} \left [A_0 e^{-\alpha_0 \varepsilon^2(t-r-3)} \right] := \bar{b}_n,
\end{eqnarray*}
which is bounded and moreover $\bar{b}_n \rightarrow 0$ as $n \rightarrow \infty$.

\noindent {\bf Case 3:} $r < -n \text{~and~} t \geq 0$:

\begin{eqnarray*}
&&P(\hat{B}_{rt}|U_0, M_n) \\
&=& P(Z[r-1,t]  \geq X[r+1,t] | H[0] = 1, H[-1] = 0, M_n) \\
&\leq& P((Z[r-1,-n-1] + Z[2,t]) \geq (X[r+1,-2] + X[0,t] + 1)) \\
&\leq& P((Z[r-1,-n-1] + Z[2,t]) \geq (X[r+1,-2] + X[0,t])) \\
&\leq& P(Z[r+3,t] \geq X[r+3,t]) \text{~~~~~~~//$n>2$ and $Z$'s and $X$'s are i.i.d r.v.s.}\\
&\leq& P(Z[r+3,t] \geq X[r+3,t],X[r+3,t]  \geq (1-\varepsilon/4)(1- e^{-(1-\beta)m})(t-r-3)) \\
&~&~~ + P(Z[r+3,t] \geq X[r+3,t], X[r+3,t] < (1-\varepsilon/4)(1- e^{-(1-\beta)m})(t-r-3)) \\
&\leq& P(Z[r+3,t] \geq (1+\varepsilon/4)\beta m(t-r-3)) \\
&~&~~ + P(X[r+3,t] < (1-\varepsilon/4)(1- e^{-(1-\beta)m})(t-r-3)) \\
&\leq& A_0 e^{-\alpha_0 \varepsilon^2(t-r-3)}
\end{eqnarray*}
for some positive constants $A_0, \alpha_0$ independent of $n,r,t$. The last two inequalities follow from Lemma~\ref{lem:poisson}, Lemma~\ref{lem:chernoff} and the facts that $1- e^{-(1-\beta)m} = (1+\varepsilon) \beta m$ and $\frac{1+\varepsilon/4}{1-\varepsilon/4} < 1+\varepsilon$.
Summing these terms, we have:
\begin{eqnarray*}
c_n & = & \sum_{(r,t):  r < -n \text{~and~} t \geq 0}P(\hat{B}_{rt}|U_0,M_n) \\
& \leq  & \sum_{ (r,t): r < -n \text{~and~} t \geq 0} \left [A_0 e^{-\alpha_0 \varepsilon^2(t-r-3)} \right] := \bar{c}_n,
\end{eqnarray*}
which is bounded and moreover $\bar{c}_n \rightarrow 0$ as $n \rightarrow \infty$.

Substituting these bounds in (\ref{eqn:up_bound}) we finally get:
\begin{equation*}
    P(\hat{F}_0|U_0) > (1- \bar{b}_n -\bar{c}_n)P(M_n|U_0).
\end{equation*}
By setting $n$ sufficiently large such that $\bar{b}_n$ and $\bar{c}_n$ are sufficiently small, we conclude that $P(\hat{G}_0)> 0$.

\end{proof}

\subsection{Waiting time for Nakamoto epochs}

We have established the fact that the event $G_s$ has  $P(G_s) \ge p > 0$ for all $s$. But how long do we need to wait for such an epoch to occur? We answer this question in the following lemma, wherein we provide a bound on the probability that in a interval $(j,j+k]$ of $k$ consecutive epochs, there are no Nakamoto epochs, i.e.,  a bound on:
$$q(j,j+k] :=P(\bigcap_{s=j+1}^{j+k} G_s^c),$$
where  $G_s^c$ is the complement of $G_s$. 
\iffcsubmission{Proofs for the following two lemmas can be found in Appendix~\ref{app:proofwaitingtime} and Appendix~\ref{app:prooftightexponent}.}\fi
\begin{lemma}
\label{lem:round}
If  $\beta m < 1-e^{-(1-\beta)m}$, then there exist constants $\alpha, A$ so that for all $j,k\geq 0$,
\begin{equation}
q(j,j+k] \leq  A \exp(- \alpha \sqrt{k}).
\end{equation}
\end{lemma}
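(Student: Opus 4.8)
The plan is to bound the probability $q(j,j+k]$ that none of the epochs in $(j,j+k]$ is a Nakamoto epoch. Since $P(G_s) \ge p > 0$ for every $s$ by Lemma~\ref{lem:infinite_many_G}, the naive obstacle is that the events $\{G_s\}$ are \emph{not} independent: whether epoch $s$ is Nakamoto depends, through $\hat F_s$, on the entire future of the processes $X,Z$. The key idea is to find a well-separated subsequence of epochs within $(j,j+k]$ whose Nakamoto-ness can be made \emph{almost} independent by conditioning on a ``local'' high-probability event — namely, following the proof of Lemma~\ref{lem:infinite_many_G}, the event $M_n^{(s)}$ that $Z[s-n-1,s+n]=0$, together with $U_s$. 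Recall that conditioning on $M_n^{(s)}$ and $U_s$ makes $P(G_s \mid U_s, M_n^{(s)}) \ge (1-\bar b_n - \bar c_n)P(M_n^{(s)} \mid U_s) =: p_n$, and one can choose a fixed $n = n(\beta,m)$ large enough that $p_n > 0$.

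Concretely, I would fix a scale $w = 2n+2$ and partition $(j,j+k]$ into roughly $k/w$ disjoint blocks of length $w$; in block number $\ell$ pick its centre epoch $s_\ell$. The events $\widetilde G_{s_\ell} := U_{s_\ell}\cap M_n^{(s_\ell)} \cap (\text{the truncated version of }\hat F_{s_\ell})$ are designed so that $\widetilde G_{s_\ell} \subseteq G_{s_\ell}$ — except that $\hat F_{s_\ell}$ as literally defined still reaches to $t = \infty$. The fix is that, on $M_n^{(s_\ell)}$ and $U_{s_\ell}$, the ``dangerous'' contributions from $|t - s_\ell|$ or $|r - s_\ell|$ large are exactly the tail sums $\bar b_n,\bar c_n$, which are summably small; so with probability at least $p_n$ the centre epoch $s_\ell$ is genuinely Nakamoto, and this now depends only on the $\mathcal W_r$'s with $r$ in a window of width $O(n)$ around $s_\ell$ plus the rare long-range tail events. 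Because consecutive centres are $\ge w$ apart, the local windows are disjoint, so these ``local successes'' are independent; the long-range tail events must be handled by a union bound, contributing an additive error that is exponentially small in $k$. Hence
$$
q(j,j+k] \le (1-p_n)^{\lfloor k/w\rfloor} + (\text{union-bound error}) \le A' \exp(-\alpha' k)
$$
for constants depending only on $\beta,m$ (which already beats the claimed $A\exp(-\alpha\sqrt k)$).

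The reason the statement is phrased with $\sqrt{k}$ rather than $k$ is that the cleaner argument lets $n$ \emph{grow} with $k$ instead of being fixed: choosing $n \asymp \sqrt k$ makes each block have width $w \asymp \sqrt k$, so there are $\asymp \sqrt k$ blocks; in each block the local success probability $p_n$ is now bounded below by a constant (indeed $p_n \to$ a positive limit, but even a crude fixed lower bound $p_* > 0$ valid for all large $n$ suffices), while the union-bound tail error over $\asymp k^2$ pairs $(r,t)$ of the Chernoff bounds from Cases 2--3 of Lemma~\ref{lem:infinite_many_G}, each of order $e^{-\alpha_0\varepsilon^2 n}= e^{-\Theta(\sqrt k)}$, stays $o(1)$. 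This yields $q(j,j+k] \le (1-p_*)^{c\sqrt k} + k^2 A_0 e^{-\alpha_0\varepsilon^2 c'\sqrt k} \le A\exp(-\alpha\sqrt k)$. I would carry out the steps in this order: (1) re-state the conditioned bound from Lemma~\ref{lem:infinite_many_G} with the scale parameter $n$ explicit; (2) set $n = \lceil\sqrt k\rceil$, define the block decomposition and centres $s_\ell$; (3) decompose $G_{s_\ell}^c$ into a ``local failure'' part and a ``long-range tail'' part, bounding the latter via the Chernoff estimates already established; (4) observe that the local-failure parts across distinct blocks depend on disjoint sets of independent variables $\mathcal W_r$, hence multiply; (5) combine via a union bound over blocks for the tail part and a product bound for the local part, and absorb constants.

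The main obstacle is step (3)–(4): making rigorous the claim that, after conditioning on the block-local configuration, the event ``$s_\ell$ is a Nakamoto epoch'' can be replaced by an event that is measurable with respect to a bounded window of variables, with the discrepancy controlled uniformly by the tails $\bar b_n,\bar c_n$. This requires carefully re-deriving the inequalities of Cases 2 and 3 of Lemma~\ref{lem:infinite_many_G} in a form uniform over the block index $\ell$ and over $j$, and checking that the independence of the windows is genuinely preserved (the windows of radius $n$ around centres spaced $2n+2$ apart are disjoint, but one must also verify that the long-range events being union-bounded do not secretly couple two blocks in a way that breaks the product bound — they do not, since any single event $\hat B_{rt}$ is simply added into the union). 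Everything else is bookkeeping with the Chernoff/Poisson tail lemmas (Lemma~\ref{lem:poisson}, Lemma~\ref{lem:chernoff}) already invoked in the excerpt.
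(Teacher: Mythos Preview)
Your overall architecture---partition $(j,j+k]$ into $\asymp\sqrt{k}$ blocks, find in each a ``local'' Nakamoto-type event measurable with respect to that block's variables, factor the local failures by independence, and union-bound the long-range $\hat B_{rt}$'s---is exactly the paper's strategy. But two points in your execution do not go through as written.

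\textbf{The fixed-$n$ claim is false.} With $n$ fixed and $w=2n+2$, you have $\Theta(k)$ blocks, and the long-range tail $T_\ell$ for \emph{each} block has probability $\approx \bar b_n+\bar c_n = e^{-\Theta(n)}$, a constant. Your union bound over blocks therefore contributes $\Theta(k)\cdot e^{-\Theta(n)}$, which diverges in $k$; it is not ``exponentially small in $k$'' as you assert. The fixed-$n$ route does not yield $A'\exp(-\alpha' k)$. (This is precisely why the paper, and its follow-up Lemma~\ref{lem:round1}, must let the block width grow with $k$ and only obtains exponents $k^{1-\epsilon}$ by bootstrapping.)

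\textbf{Do not put $M_n^{(s_\ell)}$ inside the local event.} If $\widetilde G_{s_\ell}$ includes $M_n^{(s_\ell)}=\{Z[s_\ell-n-1,s_\ell+n]=0\}$, then $P(\widetilde G_{s_\ell})\le P(M_n^{(s_\ell)})=e^{-(2n+2)\beta m}\to 0$ as $n\to\infty$. With $n\asymp\sqrt k$ this makes the per-block success probability $e^{-\Theta(\sqrt k)}$, so $(1-p_n)^{\sqrt k}\to 1$ and the product bound is vacuous. Your assertion that ``$p_n\to$ a positive limit'' is incorrect for this $p_n$. The fix is to drop $M_n$ entirely from the local event: set $L_\ell := U_{s_\ell}\cap\bigcap_{(r,t)\text{ local}}\hat B_{rt}^c$. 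Since $G_{s_\ell}\subseteq L_\ell$, Lemma~\ref{lem:infinite_many_G} (its \emph{conclusion}, not its proof device $M_n$) gives $P(L_\ell)\ge P(G_{s_\ell})\ge p>0$ uniformly in $n$, and $L_\ell$ is genuinely measurable with respect to the block's window. This is exactly what the paper does: it defines the local event $C_\ell$ directly in terms of the restricted $\hat B_{rt}$'s over a triple of adjacent length-$\sqrt k$ sub-intervals, observes $C_\ell\subseteq\{\text{no Nakamoto in }\mathcal J_{3\ell+1}\}$ so that $P(C_\ell)\le 1-p$, and uses the $1\bmod 3$ spacing to make the $C_\ell$'s depend on disjoint variables. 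The long-range remainder is collected into two global events $B,\tilde B$ (all $\hat B_{rt}$ with $t-r\ge\sqrt k$ touching $(j,j+k]$, or straddling it), each of probability $e^{-\Omega(\sqrt k)}$ by a single Chernoff-type sum.

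With these two corrections your plan is the paper's proof; steps (3)--(5) then become routine once you use $L_\ell$ as above rather than $\widetilde G_{s_\ell}$.
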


\begin{proof}
Following the definition in Lemma~\ref{lem:infinite_many_G}, let
\begin{equation*}
    \hat{B}_{rt} = \mbox{event that $Z[r-1,t]  \geq X[r+1,t]$}.
\end{equation*}
Similar to the calculation in Lemma~\ref{lem:infinite_many_G}, we have 
\begin{equation}
    P(\hat{B}_{rt}) \leq A_1 e^{-\alpha_1 \varepsilon^2(t-r)}
    \label{eqn:single}
\end{equation}
for some positive constants $A_1, \alpha_1$ independent of $r,t$.

Also we have 
\begin{equation}
    G_s^c = F_s^c \cup U_s^c = \bigcup_{(r,t): r \leq s-2, t \geq s} \hat{B}_{rt} \cup U_s^c.
\end{equation}
Divide $(j,j+k]$ into $\sqrt{k}$ sub-intervals of length $\sqrt{k}$ (assuming $\sqrt{k}$ is a integer), so that the $i$-th sub-interval is:
$$\mathcal{J}_i : = [j+1 +  (i-1) \sqrt{k}, j+ i\sqrt{k}].$$

Now look at the first, fourth, seventh, etc sub-intervals, i.e. all the $i = 1 \mod 3$ sub-intervals. Introduce the event that in the $\ell$-th ($1 \mod 3$) sub-interval ($\mathcal{J}_{3\ell+1}$), a pure adversarial chain that is rooted at a honest block (or more accurately a tipset including at least one honest block) mined in that sub-interval ($\mathcal{J}_{3\ell+1}$) or in the previous ($0 \mod 3$) sub-interval  ($\mathcal{J}_{3\ell}$) catches up with a honest block in that sub-interval ($\mathcal{J}_{3\ell+1}$) or in the next ($2 \mod 3$) sub-interval  ($\mathcal{J}_{3\ell+2}$). 

Formally,
$$C_{\ell}=\bigcap_{s \in \mathcal{J}_{3\ell+1}}
\bigcup_{(r,t): r \in \mathcal{J}_{3\ell} \cup \mathcal{J}_{3\ell+1}, r \leq s-2, t \geq s, t \in \mathcal{J}_{3\ell+1} \cup \mathcal{J}_{3\ell+2} } \hat{B}_{rt} \cup U_s^c.$$
Note that for distinct $\ell$, the events $C_\ell$'s  are independent since $\hat{B}_{rt}$'s in different $C_\ell$'s do not have overlap (the $\mathcal{J}$ intervals were cut specifically for this purpose). Also, we have
\begin{equation}
    \label{eqn:short_range}
    P(C_{\ell})\leq P(\mbox{no Nakamoto epoch in $\mathcal{J}_{3\ell+1}$}) = 1-p < 1
    \end{equation}
by Lemma~\ref{lem:infinite_many_G}. 

Introduce the atypical events:
\begin{eqnarray*}
    B &=& \bigcup_{(r,t): r \in (j,j+k] \mbox{~or~} t \in (j,j+k], r <t, t - r \geq  \sqrt{k}} \hat{B}_{rt} \;, \text{ and }\\
    \tilde{B} &=& 
    \bigcup_{(r,t):r\leq j,j+k<t} \hat{B}_{rt}\;.
\end{eqnarray*}
The events $B$ and $\tilde{B}$ are the events that an adversarial chain catches up with an honest block far ahead (more than $\sqrt{k}$ epochs).

By (\ref{eqn:single}) and an union bound we have that 

\begin{eqnarray*}
&~& P(B) \\
&\leq& \sum_{(r,t): r \in [j+1,j+k] \mbox{~or~} t \in [j+1,j+k], r < t, t - r \geq \sqrt{k}} A_1 e^{-\alpha_1 \varepsilon^2(t-r)} \\
&\leq& \sum_{r=j+1}^{j+k} \big( \sum_{t=r+\sqrt{k}}^{\infty} A_1 e^{-\alpha_1 \varepsilon^2(t-r)} \big)  + \sum_{t=j+1}^{j+k} \big( \sum_{r=0}^{t-\sqrt{k}} A_1 e^{-\alpha_1 \varepsilon^2(t-r)} \big) \\
&\leq& 2k\frac{A_1 e^{-\alpha_1 \varepsilon^2 \sqrt{k}}} {1-e^{-\alpha_1 \varepsilon^2}},
\end{eqnarray*}
and
\begin{eqnarray*}
P(\tilde{B}) &\leq& \sum_{(r,t):r\leq j,t>j+k} A_1 e^{-\alpha_1 \varepsilon^2(t-r)} \\
&\leq& \sum_{r=0}^{j} \big( \sum_{t=j+k+1}^{\infty} A_1 e^{-\alpha_1 \varepsilon^2(t-r)} \big)\\
&=& \sum_{r=0}^{j} \frac{A_1 e^{-\alpha_1 \varepsilon^2 (j+k+1-r)}}{1-e^{-\alpha_1 \varepsilon^2}} \\
&\leq& \frac{A_1 e^{-\alpha_1 \varepsilon^2 (k+1)}}{(1-e^{-\alpha_1 \varepsilon^2})^2}.
\end{eqnarray*}

Now, we have: 
\begin{eqnarray}
\label{eqn:short_long}
&&q(j,j+k] \nonumber\\
&\leq& P(\mbox{no Nakamoto epoch in $\bigcup_{\ell=0}^{\sqrt{k}/3} \mathcal{J}_{3\ell+1}$}) \nonumber \\
&\leq& P(\mbox{no isolated successful epoch in $\bigcup_{\ell=0}^{\sqrt{k}/3} \mathcal{J}_{3\ell+1}$}) + P(B) + P(\tilde{B}) + P(\bigcap_{\ell=0}^{\sqrt{k}/3} C_{\ell}) \nonumber\\
&=& e^{-\Omega(k)} + P(B)+P(\tilde{B}) + (P(C_{\ell}))^{\sqrt{k}/3} \label{eqn:ind}\\
&\leq& e^{-\Omega(k)} + 2k\frac{A_1 e^{-\alpha_1 \varepsilon^2 \sqrt{k}}} {1-e^{-\alpha_1 \varepsilon^2}} + \frac{A_1 e^{-\alpha_1 \varepsilon^2 (k+1)}}{(1-e^{-\alpha_1 \varepsilon^2})^2} + (P(C_{\ell}))^{\sqrt{k}/3} \nonumber \\
&\leq&  A \exp(- \alpha \sqrt{k}) \label{eqn:union}
\end{eqnarray}
for some positive constants $A$ and $\alpha$. The equality~\eqref{eqn:ind} is due to the independence of $C_\ell$'s and the inequality~\eqref{eqn:union} is due to \eqref{eqn:short_range}. Hence the lemma follows.
\end{proof}

We can also tighten the exponent, but at the cost of larger constants in the bound. The proof of the following lemma is almost verbatim identical with the proof of Lemma~\ref{lem:round}, and its detailed explanation can be found in Appendix~\ref{app:prooftightexponent}.

\begin{lemma}
\label{lem:round1}
If  $\beta m < 1-e^{-(1-\beta)m}$, then there exist constants $\alpha_\epsilon, A_\epsilon$ so that for all $j,k\geq 0$,
\begin{equation}
q(j,j+k] \leq  A_\epsilon \exp(- \alpha_\epsilon k^{1-\epsilon}),
\end{equation}
for any $0 < \epsilon <1$.
\end{lemma}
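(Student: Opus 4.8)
The plan is to bootstrap Lemma~\ref{lem:round} against itself. In the proof of Lemma~\ref{lem:round} the exponent $\sqrt{k}$ enters through essentially one free parameter: the sub-intervals $\mathcal{J}_i$ have some length $L$, so there are $\Theta(k/L)$ of them and $\Theta(k/L)$ mutually independent events $C_\ell$; the atypical ``long-range catch-up'' events $B,\tilde B$ must be taken at range $L$ for the locality argument to work; and the surviving contribution is $\exp(-\Omega(L))$ plus $(\text{bound on }P(C_\ell))^{\Theta(k/L)}$, which with the crude estimate $P(C_\ell)\le 1-p$ is optimized at $L=\sqrt k$. That estimate is wasteful, however: $C_\ell$ is contained in the event ``no Nakamoto epoch inside the length-$L$ sub-interval $\mathcal{J}_{3\ell+1}$'', so $P(C_\ell)\le q(\cdot,\cdot+L]$, and Lemma~\ref{lem:round} already bounds the right-hand side. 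Feeding this back in and re-optimizing $L$ gives a strictly better exponent, and iterating finitely many times drives the exponent arbitrarily close to $k$.

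Concretely, I would prove by induction on $n\ge 0$ the claim $S_n$: there exist constants $A_n,\alpha_n>0$ with $q(j,j+k]\le A_n\exp(-\alpha_n k^{\gamma_n})$ for all $j,k\ge 0$, where $\gamma_n:=(n+1)/(n+2)$. The base case $S_0$ (exponent $k^{1/2}$) is Lemma~\ref{lem:round}. For $S_n\Rightarrow S_{n+1}$ I replay the proof of Lemma~\ref{lem:round} almost verbatim, taking sub-intervals of length $L:=\lceil k^{\theta_n}\rceil$ with $\theta_n:=1/(2-\gamma_n)$, atypical events at range $L$, and replacing the step $P(C_\ell)\le 1-p$ by the inductive bound $P(C_\ell)\le q(\cdot,\cdot+L]\le A_n\exp(-\alpha_n L^{\gamma_n})$. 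By independence of the $C_\ell$'s one gets $P(\bigcap_\ell C_\ell)\le\big(A_n\exp(-\alpha_n L^{\gamma_n})\big)^{\Theta(k/L)}=\exp\!\big(O(k^{1-\theta_n})-\Omega(k^{\,1-\theta_n(1-\gamma_n)})\big)$, while $P(B)+P(\tilde B)+P(\text{no isolated successful epoch})=\exp(-\Omega(k^{\theta_n}))$ exactly as before. Since $1-\theta_n(1-\gamma_n)>1-\theta_n$ the constant-penalty factor is dominated by the main term, and the choice $\theta_n=1/(2-\gamma_n)$ equalizes the two surviving exponents $1-\theta_n(1-\gamma_n)$ and $\theta_n$, both equal to $1/(2-\gamma_n)=\gamma_{n+1}=(n+2)/(n+3)$; this is $S_{n+1}$.

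Since $\gamma_n=(n+1)/(n+2)\uparrow 1$, given $\epsilon\in(0,1)$ I take $n\ge\max(0,\lceil 1/\epsilon-2\rceil)$, so $\gamma_n\ge 1-\epsilon$; then for $k\ge 1$ we have $k^{\gamma_n}\ge k^{1-\epsilon}$, so $S_n$ yields $q(j,j+k]\le A_n\exp(-\alpha_n k^{1-\epsilon})$, with $k=0$ trivial. Setting $A_\epsilon=A_n$ and $\alpha_\epsilon=\alpha_n$ proves the lemma.

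The obstacle I anticipate is bookkeeping rather than a new idea: the constants $A_n$ grow with $n$ and enter raised to the power $\Theta(k/L)=\Theta(k^{1-\theta_n})$, hence contribute a factor $\exp(O(k^{1-\theta_n}))$ that is harmless only because the improved main term is $\exp(-\Omega(k^{\gamma_{n+1}}))$ with $\gamma_{n+1}=\theta_n>1-\theta_n$ (equivalently because $\gamma_n>0$ throughout, and in fact $\gamma_n\ge 1/2$). One must also re-verify, at the enlarged scale $L$, the locality argument of Lemma~\ref{lem:round}: with sub-intervals of length $L$ and atypical events of range $L$, every ``local'' catch-up event stays within three consecutive sub-intervals, so the events $C_\ell$ (one per such block) are still mutually independent --- but this is the same argument, merely rescaled. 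As only finitely many iterations are needed for a fixed $\epsilon$, the degradation of $A_n,\alpha_n$ is immaterial.
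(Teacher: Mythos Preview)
Your proposal is correct and is essentially the same argument as the paper's: both bootstrap Lemma~\ref{lem:round} by re-running its proof with sub-intervals of length $k^{\gamma_{n+1}}$ and replacing the crude bound $P(C_\ell)\le 1-p$ by the inductive estimate $P(C_\ell)\le q(\cdot,\cdot+L]\le A_n\exp(-\alpha_n L^{\gamma_n})$, yielding the exponent sequence $\gamma_n=(n+1)/(n+2)\uparrow 1$ (the paper indexes this as $1-m_n=n/(n+1)$, a shift by one). Your extra remark that $\theta_n=1/(2-\gamma_n)$ is the optimizing choice, and your explicit handling of the $A_n^{\Theta(k^{1-\theta_n})}$ penalty, are not spelled out in the paper but match its computations exactly.
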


\subsection{Persistence and liveness}

Equipped with all the previous lemmas, we can now prove the persistence and liveness properties of EC for $\beta m < 1-e^{-(1-\beta)m}$.
\begin{proof}[Proof of Theorem~\ref{thm:ec}]
Suppose current epoch is $r$. Then by Lemma~\ref{lem:round1}, with probability at least $1-e^{-\Omega(\kappa^{1-\epsilon})}$, there is at least one Nakamoto epoch in the interval $(r-\kappa,r]$. Let epoch $s \in (r-\kappa,r]$ be a Nakamoto epoch. Then by Lemma~\ref{lem:converge}, the chain up to epoch $s-1$ is permanent since the unique honest block in epoch $s$ never leaves the heaviest chain. Hence EC is persistent with probability at least $1-e^{-\Omega(\kappa^{1-\epsilon})}$. The liveness of EC is simply a consequence of the frequent occurrence of Nakamoto epochs. Particularly, for each honest transaction, either it will be included by an honest block $B$ in a Nakamoto epoch or  it has already been included by $B$'s ancestors.
\end{proof}

\section{n-split Attack}
\label{sec:attack}

In order to confirm whether an adversary with power $\beta m \ge 1-e^{-(1-\beta)m}$ can indeed break the persistence and liveness properties of the system, we consider the following $n$-split attack. 
\subsection{Attack description}
The attacker tries to split the honest participants among $n$ chains such that in each epoch, at most one honest block is added to each chain (i.e., no two honest players mine on the same chain). To do this, the attacker creates $n$ copies of one of its block (each copy has the same election proof, but different payloads) and sends one different block to each of the $n$ honest players; see illustration in Figure~\ref{fig:epoch-boundary-step1}. 
To maintain the split for a long period, the adversary must repeat the attack at every epoch where 
at least one honest block is mined.
In this case, the weight of the chain of each honest player will increase by two: one honest block and one adversarial block. For example in Figure~\ref{fig:epoch-boundary-step1}, since blocks C and D are both honest (i.e., created by honest miners), by the next epoch, epoch 3, all the participants will have received them and use the deterministic tie breaker to all decide to mine on the same tipset, say \{D\}. 
Hence the adversary must create equivocating blocks in epoch 2 as well in order to ensure that in epoch 3, honest miners all choose different tipsets to append their block to. 
In Figure~\ref{fig:ep-boundary-step2}, the adversary sends equivocating blocks $E_1$ and $E_2$ to prevent blocks $H$ and $F$ from being appended to the same chain.
These figures include only two honest blocks at epoch 2 and 3 for clarity. In practice, the adversary will create as many equivocating blocks as there are honest miners to ensure that everyone sees a different block and that no two honest participants mine on the same tipset.

Whenever there is no honest block mined in an epoch, the attacker does nothing. In this case, the weight of the chain of each honest player will not increase. Meanwhile, the attacker also reuses all its blocks to build a private chain, i.e., a chain that it does not broadcast to other participants and that does not include any honest blocks. The expected chain growth of the adversary's private chain is $\beta m$.
The weight of the honest chain increases by two if there is at least one honest block mined in an epoch (which happens with probability $1-e^{-(1-\beta)m}$), and 0 otherwise (which happens with probability $e^{-(1-\beta)m}$). Hence, the expected chain growth of the honest chain is $2(1-e^{-(1-\beta)m})$. Therefore, this attack succeeds with non-negligible probability when $\beta m > 2(1-e^{-(1-\beta)m})$, i.e., when the adversarial chain grows at a higher rate than the honest split chains. Rather than specifying the exact success probability, we demonstrate that it remains constant, independent of the confirmation depth $\tau$, as defined in Definition~\ref{def:conf}. Let $L$ be adversarial lead, i.e., the adversary has a lead of $L$ additional private blocks over the public heaviest chain. Suppose that with probability $p_{L_0}$, the initial lead before the attack starts is $L_0$.  Although $p_{L_0}$ decreases with $L_0$, it remains non-zero because there's a chance the adversary could mine $L_0$ blocks before the honest nodes mine any block. Note that in the $n$-split attack, as long as the adversarial lead $L > 0$, the adversary can invariably split the honest nodes across $n$ chains. Therefore, the adversarial private chain will grow faster than the honest public chain when $\beta m > 2(1-e^{-(1-\beta)m})$. According to the standard random walk (with drift) theory~\cite{feller1971introduction}, $L$ goes to 0 only with a probability of $e^{-O(L_0)}$. This implies that the $n$-split attack succeed with probability at least $p_{L_0}(1-e^{-O(L_0)})$, for any value of the confirmation depth $\tau$.

\begin{figure}
\centering
\begin{subfigure}[t]{.3\textwidth}
  \centering
    \includegraphics[width=.6\textwidth]{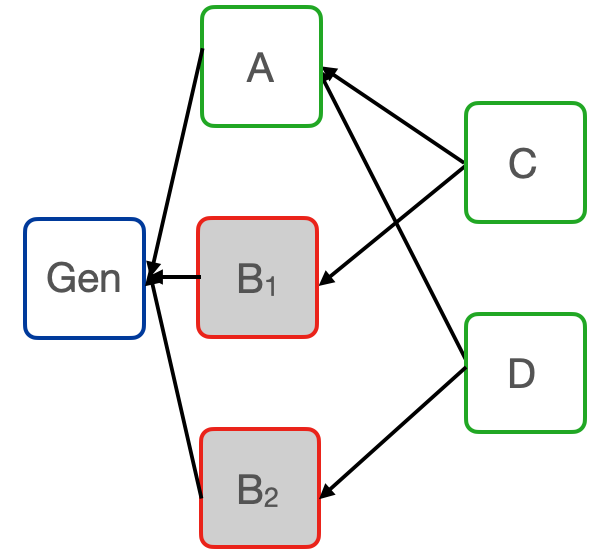}
    \caption{The adversary sends two different blocks $B_1$, $B_2$ in epoch 1 such that in epoch 2, honest blocks $C$, $D$  have different parents and hence cannot be included in a tipset. The honest power is thus split between different tipsets' chains.}
    \label{fig:epoch-boundary-step1}
\end{subfigure}%
\hspace{1em}%
\begin{subfigure}[t]{.3\textwidth}
  \centering
    \includegraphics[width=.8\textwidth]{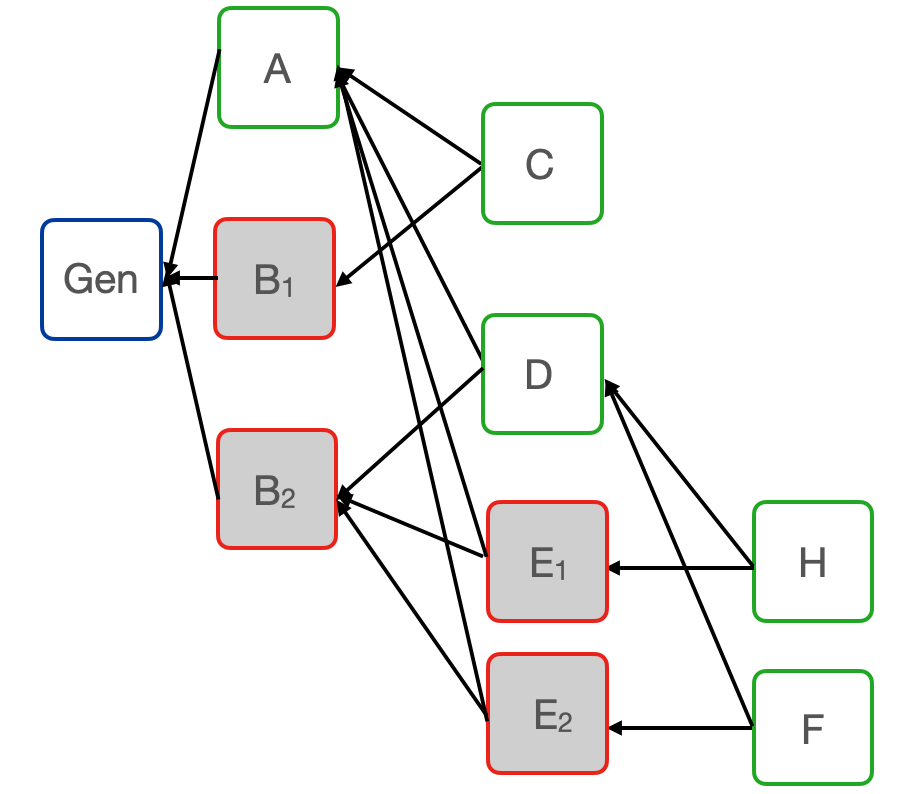}
    \caption{The adversary keeps the network split in epoch 3 by creating two equivocating blocks: $E_1$ and $E_2$ in epoch 2. In epoch 3, honest blocks $H$ and $F$ are mined on two different tipsets. }
  \label{fig:ep-boundary-step2}
\end{subfigure}
\begin{subfigure}[t]{.3\textwidth}
  \centering
    \includegraphics[width=.8\textwidth]{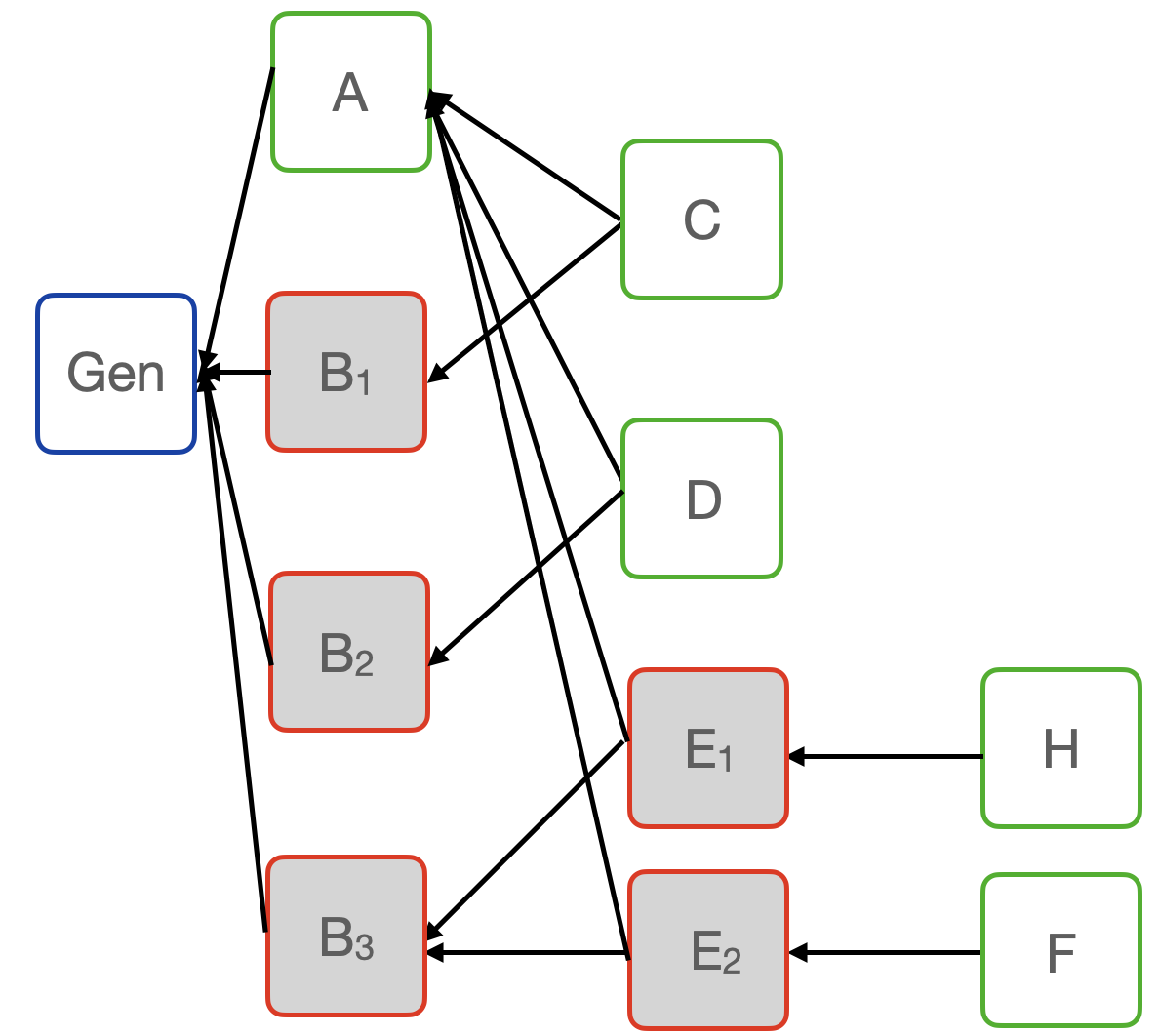}
    \caption{If ties are always broken in favor of the adversary, the adversary can instead create another block $B_3$ in epoch 1. In epoch 2, blocks $E_1$ and $E_2$ are preferred to $C$ and $D$, hence in epoch 2, all the tipsets' chains increase by only one block. The attack is repeated in the next epoch, as long as the adversary has enough blocks to create a fork as heavy as the honest chains.}
  \label{fig:ep-boundary-tie}
\end{subfigure}
\caption{n-split attack. Each block filled in grey is an equivocating block, meaning they were created by the adversary using the same leader election proof in one epoch. Each green block is an honest block.}
\label{fig:epochboundaries}
\end{figure}

Some numerical results: for $m = 3$, we have $\beta>0.512$; for $m = 5$, we have $\beta>0.382$; and $\beta>0.284$ for $m = 7$. Approximately, $\beta \gtrsim 2/m$.
Intuitively, with this attack, the threshold is inversely proportional to $m$, as with a bigger $m$, the adversary is elected leader more often and thus has more opportunities to keep the network split. If the adversary were elected leader on a less regular basis, it would be harder to keep sending equivocating blocks and thus keep the network split for longer.
Note that this threshold applies specifically to the attack described above, and it differs from the security threshold identified in Theorem~\ref{thm:ec}. This is because different thresholds may exist for different attacks.

Following the standard model of longest-chain analysis~\cite{backbone,david2018ouroboros,dembo2020everything}, we give the power of tie-breaking to the attacker (i.e., tie-breaking always favors the attacker's block). Recall that the same assumption is made in our model (Section~\ref{sec:model}) and proof (Section~\ref{sec:proof}).
For example in Figure~\ref{fig:ep-boundary-tie}, if we assume that the adversarial blocks (in red) will always be favored in the case of two chains with the same number of blocks, then the adversary does not need to create a block on the same tipset as honest blocks (as in Figure~\ref{fig:ep-boundary-step2}). 
The adversary will instead create another block $B_3$ and mine yet on another tipset than the honest participant in epoch 2. In epoch 3 the adversarial chains ending in tipsets $\{E_1\}$ and $\{E_2\}$ are preferred over $\{C\}$ or $\{D\}$, hence honest blocks $H$ and $F$ are mined on different forks and each fork's weight increased by only one in epoch 2, as opposed to 2 in Figure~\ref{fig:ep-boundary-step2}.
By repeating this attack at each epoch, the weight of the chain of each honest player will only increase by one when there is at least one honest block mined.
Following the same argument as above, this attack succeeds with non-negligible probability when $\beta m > (1-e^{-(1-\beta)m})$, i.e., when the adversarial chain grows at a higher rate than the honest split chains. We notice that this now matches the security threshold in Theorem~\ref{thm:ec}, hence proving that $\beta m = 1-e^{-(1-\beta)m}$ is the tight threshold of the protocol in our security model as defined in Section~\ref{sec:model}. Indeed, Theorem~\ref{thm:ec} proves that no adversary below this threshold can break the security of the EC, and the $n$-chain split attack just described proves that an adversary above this threshold can indeed break the persistence and liveness of the EC.

Some numerical results: for $m = 3$, we have $\beta\simeq 0.293$; for $m = 5$, we have $\beta\simeq 0.196$; and $\beta\simeq 0.143$ for $m = 7$. Approximately, $\beta \simeq 1/m$. As remarked before, the threshold is inversely proportional to the number of leaders elected as, intuitively, being elected more often gives more opportunities to an adversary.

\subsection{Discussion}\label{sec:attack-discussion}


\noindent {\bf Rationality of the attack.}
We note that this attack is detectable as everyone can see that blocks with the same proof of eligibility but different payload were created. In practice, this behaviour is slashable in Filecoin~\cite{filecoin}. However, in Filecoin an adversary has the ability to spread its storage over multiple identities, i.e., create multiple identities that each possesses one unit of storage.
For example in Filecoin the minimum unit of storage that can be pledged to the chain is 32 GiB. As of August 2023 the total storage pledged to the chain is around 11 EiB~\cite{filfox}, hence an adversary that possesses 20\% of the total power, i.e., 2.2 EiB could potentially ``spread'' its storage over $\frac{2.2\times10^{18}}{32\times10^9}\simeq 10^8$ different identities, that each possesses 32 GiB of storage.
At each epoch, except with extremely small probability, the adversary will have a new ``identity'' elected to create a block (it is very unlikely for a miner with 32GiB of storage out of 11 EiB to be elected twice in a row).
Assuming that each identity gets slashed and removed from the list of participants after equivocation, after performing the attack over 1000 epochs, the adversary will still have $9.9999\cdot 10^{7}$ identities left out of $10^8$ and will only be slashed $\frac{10^3}{10^8}=10^{-5}$ of its total collateral. 
In this analysis we considered orders of magnitude rather than exact numbers, so for simplicity we counted only the ``real power'' and did not account for the ``boosted adjusted power''
that can be gained through the FIL+ program~\cite{filplus}.
Note that it is not possible for any honest miner to know which identities belong to the adversary before the equivocation.
Hence excluding equivocating participants from the protocol is not sufficient to prevent the attack.

Furthermore, we note that for the adversary to be slashed, a special transaction, \emph{a fraud proof} transaction must be submitted on-chain by any participant. An adversary that is able to continually exclude honest blocks as is the case with the $n$-split attack may thus in practice never be slashed as no honest participant will get the opportunity to include the slashing transaction on-chain. This is why even when considering incentives and the slashing mechanism in place, this attack is still rational.\\

\noindent {\bf Network control.}
In this attack, we assumed a powerful adversary that not only has the power to break ties in its favor, but has also full control of the network, as specified in Section~\ref{sec:model-network}.
However we remark that for this attack to work, an adversary only needs limited power over the network. Specifically, the adversary needs to be directly connected to every participants but does not need to control the propagation time between honest participants, as we illustrate now.

In Filecoin honest miners will stop accepting blocks for an epoch after a cutoff time. 
For the split to happen the adversary could send each block $B_1,\dots,B_n$ to each different participants $1,\dots,n$ right before the cutoff, i.e., participant $i$ receives block $B_i$ from the adversary just before the cutoff time, ensuring block $B_i$ is accepted by participant $i$. The adversary would need to know the propagation time between itself and each participant to do so, however this is easy to estimate.
Since participant $i$, receives $B_i$ just before the cutoff time, whatever the propagation delay between $i$ and another honest participant $j$ is, the adversary is guaranteed that $j$ will received $B_i$ from $i$ \emph{after} the cutoff time and hence that any honest miner $j\ne i$ will not accept $B_i$.

Furthermore we note that when participant $j$ receives block $B_i$, after the cutoff, $j$ will detect the equivocation as $j$ already received block $B_j$ from the adversary.
However at that point, $j$ has already created its block that includes $B_j$ as a parent, hence it is too late for $j$ to discard $B_j$ due to equivocation.
The mitigation that we propose in Section~\ref{sec:mitigation-cbroadcast} changes this.

\begin{figure}
\centering
\begin{subfigure}[b]{0.8\textwidth}
  \centering
  \includegraphics[width=\linewidth]{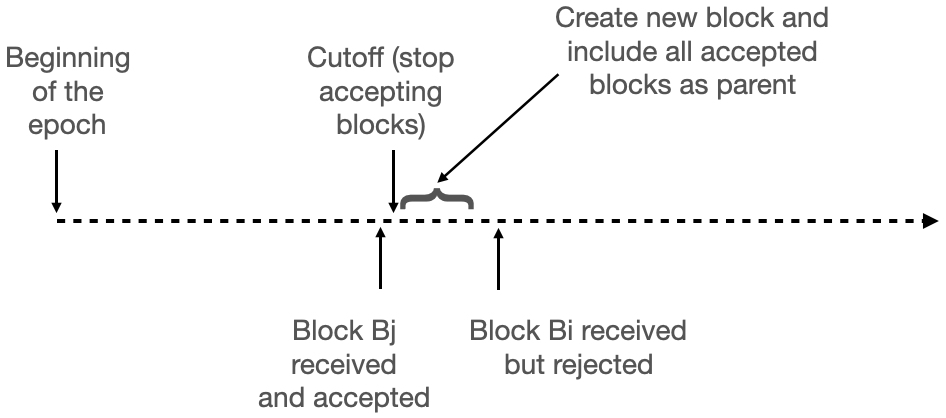}
  \caption{Current case.}
  \label{fig:cutoff1}
\end{subfigure}%
\vspace{\floatsep}
\begin{subfigure}[b]{0.7\textwidth}
  \centering
    \includegraphics[width=1\linewidth]{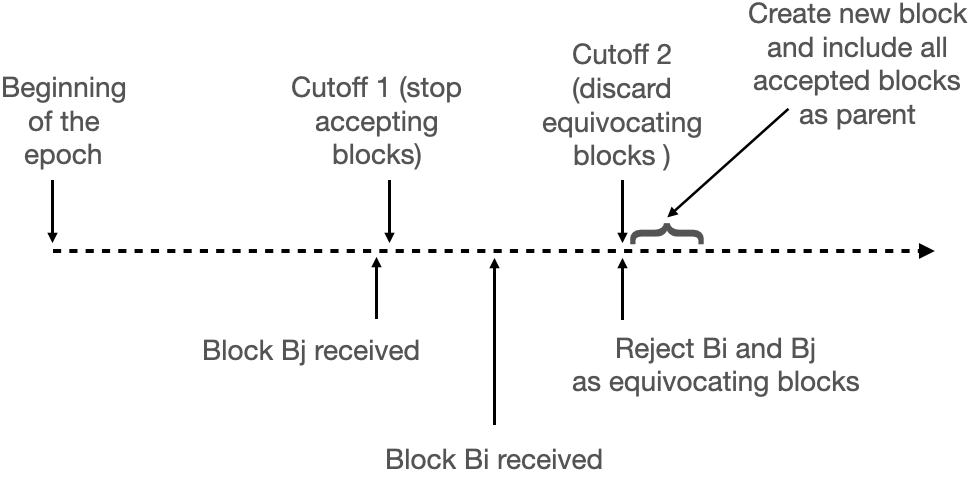}
    \caption{Case with a (simplified) consistent broadcast.}
  \label{fig:cutoff2}
\end{subfigure}
\caption{The dashed arrow represents the arrow of time. The different cutoff and arrival time are marked with vertical arrows. The adversary ensures that $j$ receives $B_j$ right before the cutoff so $j$ accepts $B_j$. In the first case, by the time $j$ sees an equivocation, it is too late as $B_j$ was already included as a parent. In the second case, assuming $j$ receives $B_i$ before the second cutoff, then $j$ will discard $B_j$ and not include it as a parent.}
\label{fig:cutoff}
\end{figure}

\section{Mitigations}\label{sec:mitigation}
We propose two possible mitigations to the $n$-split attack described in Section~\ref{sec:attack} which could also help increase the security threshold of EC.

\subsection{Replace EC by the Longest-chain Protocol in SPC}
One solution to the $n$-split attack is to remove the notion of tipsets and instead change EC to the longest-chain protocol (i.e., Ouroboros family of protocols~\cite{kiayias2017ouroboros,david2018ouroboros,badertscher2018ouroboros} in the proof-of-stake setting), where one block has exactly one parent. 
In the longest-chain protocol, the effort of an adversary to split the network would have much less impact on the overall security of the protocol since each chain can increase by one block at each epoch at most anyway.
Furthermore, moving to the longest-chain protocol allows for inheritance of all the security properties (e.g., a security threshold of 50\%) of all proof-of-stake protocols based on that setting~\cite{david2018ouroboros,bagaria2019proof,dembo2020everything}. Dembo et al.~\cite{dembo2020everything} indeed showed that in the longest-chain case, the worst attack is the private attack. Hence, the $n$-split attack described in Section~\ref{sec:attack}, or its variant, would not be the worst attack anymore. However, transitioning from EC to a longest-chain protocol is not a straightforward task. First, it's crucial to understand that merely setting $m = 1$ does not transform EC into a longest-chain protocol, as more than one block can still be added per epoch. Furthermore, our analysis indicates that an EC with $m = 1$ has a security threshold of approximately 43.2\%, as opposed to 50\% in the longest-chain protocol. Consequently, to enhance the security, the concept of a tipset will need to be eliminated. Implementing such a change, however, would necessitate a hard fork.

\subsection{Consistent Broadcast}\label{sec:mitigation-cbroadcast}
Another solution is to use a form of consistent or reliable broadcast~\cite{bracha1985asynchronous,guerraoui2019scalable}.
This type of broadcast prevents an adversary from equivocating (i.e., creating two blocks with the same leader election proof but different contents). 
The consistent broadcast 
consists in adding a second cutoff to the cutoff discussed in Section~\ref{sec:attack-discussion}.
Specifically, it ensures that after participant $j$ received $B_j$ from the adversary (before the first cutoff), $j$ will wait for a ``second'' cutoff before forming its block and including $B_j$ as a parent. When $j$ receives equivocating block $B_i$ from the adversary, $j$ will detect the equivocation and decide not to include $B_j$ (neither $B_i$), hence the attack is mitigated.
This is illustrated in Figure~\ref{fig:cutoff}.
The epoch in Filecoin is thus split as follows: during the first period, participants will store every valid block received in their ``pending blocks'' set. In the second period, after the first cutoff and before the second cutoff, every new valid block received will be stored in the set of ``rejected blocks'' for that epoch. In the third period, after the second cutoff, participants will compare the set of pending blocks and rejected blocks, if they detect equivocating blocks, they are removed from the pending set. Every block that is left in the pending set will then be included in the tipset.
This implementation as it is simple and backward compatible (i.e., only requires a soft-fork) although it assumes synchronicity.
\iffcsubmission{The network could, however, still be split between the nodes that accept the adversarial block vs those that do not.}
\else{With consistent broadcast, however, the adversary is still able to split the network in two ways. First it could do so by ensuring that only a fraction of the honest nodes accept its block (i.e., the network will be split between the nodes that accept the adversarial block vs those that do not).  
The second way in which the network could be split is if the adversary is elected more than once, say $\ell$ times in one epoch (here we assume that the adversary controls many different participants). 
Then the adversary could similarly ensure that for each block it is able to create (with different election proofs), only some of the nodes accept it.
The network is then split in $2^\ell$ ways (all the combination of accept vs reject for each of the $\ell$ blocks).}\fi
We hypothesize that, under a non-equivocating adversary, the security threshold of EC with $m=5$ is approximately $40\%$. Intuitively, with this level of power, the adversary creates fewer than two blocks per epoch on average; furthermore, without the ability for equivocation, it is impossible for the adversary to maintain two chains of equal weights over numerous epochs.
We leave a formal proof as future work.

\section{Limitations and Future Work}\label{sec:limitations-weight}
For practical reasons, this work made a few simplifying assumptions. We discuss them here.\\

\noindent {\bf Incentive consideration.}
This work considers the classic model of honest vs malicious participants and does not address the rationality of participants. A formal study of incentive compatibility is also important for understanding the security of EC. However, we leave this for future work.
Assuming a fully malicious adversary that is willing to lose money to attack the system, a scenario we consider in this paper, makes for a stronger proof than assuming a rational adversary.
In 
\iffcsubmission{Appendix~\ref{sec:attack-discussion},  }
\else{Section~\ref{sec:attack-discussion}, }\fi
we discussed why it is realistic to consider an irrational adversary for the $n$-split attack we proposed, as slashing may not always be possible if the adversary has the ability to censor transactions.
It still remains to show that the honest strategy is compatible with a rational strategy even in the presence of an adversary. We leave this for future work.\\

\noindent {\bf Weight function.}
In our analysis, we only took into consideration the number of blocks in the chain for the weight function. We leave as future work an analysis that also considers the total storage, as specified in Expected Consensus~\cite{filecoin}. Specifically we believe that a complex weight function allows for more vectors of attack and that an adversary could use this to try to blow the weight of its own chain. For example, the adversary could remove its storage from the main chain and thus decrease the weight of the main chain, while privately creating an alternative chain that would be heavier because it has more storage pledged. 
The mechanisms for maintaining and removing storage are, however, complex and ignored in this work. 
For simplicity, we thus consider the weight of a tipset to simply be equal to the number of blocks referenced in its blockDAG.


\section{Conclusion}
In this paper we presented a formal analysis of Expected Consensus, a sub-protocol of Filecoin's  Storage Power Consensus, and we proposed two concrete ways to improve SPC's security.
One of our mitigations, using consistent broadcast, is currently being implemented as a Filecoin Improvement Proposal. It remains an open problem to quantify the new security threshold of EC with this fix, although our proofs remain valid in this case, hence the security threshold is at least such that $\beta m< 1- e^{-(1-\beta)m}$ as proved in Section~\ref{sec:proof}.
Furthermore, we made many simplifying assumptions in this work. It would be interesting to relax these in the future work; e.g., by extending this proof to the dynamic and asynchronous case, considering the more complex variant of the weight function or incorporating incentives.


\bibliography{ref}

\begin{thebibliography}{10}

\bibitem{filecoin}
{ Filecoin Spec}.
\newblock \url{https://spec.filecoin.io/}.

\bibitem{miningguide}
{A Guide to Filecoin Storage Mining}.
\newblock
  \url{https://filecoin.io/blog/posts/a-guide-to-filecoin-storage-mining/}.
\newblock Accessed: 2023-08-02.

\bibitem{coinmarketcap}
{Coin Market Cap}.
\newblock \url{https://coinmarketcap.com/}.

\bibitem{drand}
{Drand}.
\newblock \url{http://https://drand.love/}.
\newblock Accessed: 2022-08-30.

\bibitem{filplus}
{Filecoin Plus}.
\newblock
  \url{https://docs.filecoin.io/basics/how-storage-works/filecoin-plus/}.

\bibitem{filfox}
{Filfox - Filecoin Explorer}.
\newblock \url{https://filfox.info/en}.

\bibitem{fip}
{FIP-0051: Improving EC security with Consistent Broadcast}.
\newblock
  \url{https://github.com/filecoin-project/FIPs/blob/master/FRCs/frc-0051.md}.

\bibitem{azouvi2022pikachu}
Sarah Azouvi and Marko Vukoli{\'c}.
\newblock Pikachu: Securing {PoS} blockchains from long-range attacks by
  checkpointing into {Bitcoin} {PoW} using {Taproot}.
\newblock {\em arXiv preprint arXiv:2208.05408}, 2022.

\bibitem{badertscher2018ouroboros}
Christian Badertscher, Peter Ga{\v{z}}i, Aggelos Kiayias, Alexander Russell,
  and Vassilis Zikas.
\newblock Ouroboros genesis: Composable proof-of-stake blockchains with dynamic
  availability.
\newblock In {\em Proceedings of the 2018 ACM SIGSAC Conference on Computer and
  Communications Security}, pages 913--930, 2018.

\bibitem{bagaria2019proof}
Vivek Bagaria, Amir Dembo, Sreeram Kannan, Sewoong Oh, David Tse, Pramod
  Viswanath, Xuechao Wang, and Ofer Zeitouni.
\newblock Proof-of-stake longest chain protocols: Security vs predictability.
\newblock {\em arXiv preprint arXiv:1910.02218}, 2019.

\bibitem{bagaria2019prism}
Vivek Bagaria, Sreeram Kannan, David Tse, Giulia Fanti, and Pramod Viswanath.
\newblock Prism: Deconstructing the blockchain to approach physical limits.
\newblock In {\em Proceedings of the 2019 ACM SIGSAC Conference on Computer and
  Communications Security}, pages 585--602. ACM, 2019.

\bibitem{bracha1985asynchronous}
Gabriel Bracha and Sam Toueg.
\newblock Asynchronous consensus and broadcast protocols.
\newblock {\em Journal of the ACM (JACM)}, 32(4):824--840, 1985.

\bibitem{david2018ouroboros}
Bernardo David, Peter Ga{\v{z}}i, Aggelos Kiayias, and Alexander Russell.
\newblock {Ouroboros Praos}: An adaptively-secure, semi-synchronous
  proof-of-stake blockchain.
\newblock In {\em Annual International Conference on the Theory and
  Applications of Cryptographic Techniques}, pages 66--98. Springer, 2018.

\bibitem{dembo2020everything}
Amir Dembo, Sreeram Kannan, Ertem~Nusret Tas, David Tse, Pramod Viswanath,
  Xuechao Wang, and Ofer Zeitouni.
\newblock Everything is a race and {Nakamoto} always wins.
\newblock In {\em Proceedings of the 2020 ACM SIGSAC Conference on Computer and
  Communications Security}, pages 859--878, 2020.

\bibitem{feller1971introduction}
William Feller.
\newblock An introduction to probability theory and its applications.
\newblock Technical report, Wiley series in probability and mathematical
  statistics, 3rd edn.(Wiley, New~…, 1971.

\bibitem{backbone}
Juan Garay, Aggelos Kiayias, and Nikos Leonardos.
\newblock The bitcoin backbone protocol: Analysis and applications.
\newblock In {\em Annual International Conference on the Theory and
  Applications of Cryptographic Techniques}, pages 281--310. Springer, 2015.

\bibitem{gilad2017algorand}
Yossi Gilad, Rotem Hemo, Silvio Micali, Georgios Vlachos, and Nickolai
  Zeldovich.
\newblock Algorand: Scaling {Byzantine} agreements for cryptocurrencies.
\newblock In {\em Proceedings of the 26th Symposium on Operating Systems
  Principles}, pages 51--68, 2017.

\bibitem{guerraoui2019scalable}
Rachid Guerraoui, Petr Kuznetsov, Matteo Monti, Matej Pavlovic, Dragos-Adrian
  Seredinschi, and Yann Vonlanthen.
\newblock Scalable {Byzantine} reliable broadcast (extended version).
\newblock {\em arXiv preprint arXiv:1908.01738}, 2019.

\bibitem{kiayias2017ouroboros}
Aggelos Kiayias, Alexander Russell, Bernardo David, and Roman Oliynykov.
\newblock Ouroboros: A provably secure proof-of-stake blockchain protocol.
\newblock In {\em Annual International Cryptology Conference}, pages 357--388.
  Springer, 2017.

\bibitem{kiffer2018better}
Lucianna Kiffer, Rajmohan Rajaraman, and Abhi Shelat.
\newblock A better method to analyze blockchain consistency.
\newblock In {\em Proceedings of the 2018 ACM SIGSAC Conference on Computer and
  Communications Security}, pages 729--744, 2018.

\bibitem{micali1999verifiable}
Silvio Micali, Michael Rabin, and Salil Vadhan.
\newblock Verifiable random functions.
\newblock In {\em 40th Annual Symposium on Foundations of Computer Science
  (Cat. No. 99CB37039)}, pages 120--130. IEEE, 1999.

\bibitem{nakamoto2008peer}
Satoshi Nakamoto and A~Bitcoin.
\newblock A peer-to-peer electronic cash system.
\newblock {\em Bitcoin.--URL: https://bitcoin. org/bitcoin. pdf}, 4(2), 2008.

\bibitem{natoli2016balance}
Christopher Natoli and Vincent Gramoli.
\newblock The balance attack against proof-of-work blockchains: The r3 testbed
  as an example.
\newblock {\em arXiv preprint arXiv:1612.09426}, 2016.

\bibitem{pass2017analysis}
Rafael Pass, Lior Seeman, and Abhi Shelat.
\newblock Analysis of the blockchain protocol in asynchronous networks.
\newblock In {\em Advances in Cryptology--EUROCRYPT 2017: 36th Annual
  International Conference on the Theory and Applications of Cryptographic
  Techniques, Paris, France, April 30--May 4, 2017, Proceedings, Part II},
  pages 643--673. Springer, 2017.

\bibitem{pass2017sleepy}
Rafael Pass and Elaine Shi.
\newblock The sleepy model of consensus.
\newblock In {\em International Conference on the Theory and Application of
  Cryptology and Information Security}, pages 380--409. Springer, 2017.

\bibitem{rabin1983transaction}
Michael~O Rabin.
\newblock Transaction protection by beacons.
\newblock {\em Journal of Computer and System Sciences}, 27(2):256--267, 1983.

\bibitem{ren2019analysis}
Ling Ren.
\newblock Analysis of {Nakamoto} consensus.
\newblock {\em Cryptology ePrint Archive}, 2019.

\bibitem{sompolinsky2015secure}
Yonatan Sompolinsky and Aviv Zohar.
\newblock Secure high-rate transaction processing in bitcoin.
\newblock In {\em Financial Cryptography and Data Security: 19th International
  Conference, FC 2015, San Juan, Puerto Rico, January 26-30, 2015, Revised
  Selected Papers 19}, pages 507--527. Springer, 2015.

\end{thebibliography}

\appendix

\section*{Appendix}

\section{Pseudocode for EC}\label{app:pseudocode}
The main algorithm is presented in Algorithm~\ref{pseudocode:main} and the algorithm for block validation is presented in Algorithm~\ref{pseudocode:block-validation}.
\algdef{SE}[EVENT]{Event}{EndEvent}[1]{\textbf{upon event}\ #1\ \algorithmicdo}{\algorithmicend\ \textbf{event}}%
\algtext*{EndEvent}

\algdef{SE}[PARAM]{Param}{EndParam}[1]{\textbf{Parameters:}  #1}{\algorithmicend\ \textbf{parm2}}%
\algtext*{EndParam}

\algdef{SE}[IMPORT]{Import}{EndImport}[1]{\textbf{import} #1}{\algorithmicend }%
\algtext*{EndImport}

\algdef{SE}[INIT]{Init}{EndInit}[1]{\textbf{Init:} #1}{\algorithmicend }%
\algtext*{EndInit}

\begin{algorithm*}
\begin{algorithmic}[1]
\Import
\State drand
\State ForkChoiceRule
\State Broadcast
\State VRF
\State $\textsf{isValid}$ (Algorithm~\ref{pseudocode:block-validation})
\EndImport
\Param
\State epochLength
\State $m$
\State $\target$ \Comment{Chosen such that $m$ leaders are elected on expectation}
\EndParam
\Init 
\State epochNumber $\gets 0$
\State blockDAG$\gets$\{Genesis Block\}
\EndInit
\Event{time.Now() \% epochLength == 0}
\Comment{Beginning of the epoch}
\State epochNumber $\gets $ epochNumber +1 
\State $\seed\gets$ drand(epochNumber)
\State $(y,p)\gets\vrfproof_\sk(\seed)$
\If{$y\le \target$}
\State $\tipset\gets$ForkChoiceRule(blockDAG) \Comment{Choose the DAG with the most blocks}
\State $\block\gets$ CreateBlock($\tipset,(y,p)$, epochNumber,$\textsf{WinningPost}$ payload)
\State Broadcast($\block$)
\EndIf
\EndEvent
\Event{Receiving block $\block$}
\If{$\textsf{isValid}(\block)==1$}
\State blockDAG.append($\block$)
\EndIf
\EndEvent
\end{algorithmic}
\caption{Main algorithm}\label{pseudocode:main}
\end{algorithm*}

\begin{algorithm*}
\begin{algorithmic}[1]
\State \textbf{Input:} block $\block$
\Import
\State drand
\State $\textsf{isPayloadValid}$
\State $\textsf{isStorageValid}$
\EndImport
\State Parse ($\tipset,(y,p)$, epochNumber,$\textsf{WinningPost}$ payload) $\gets\block$
\State $\seed\gets$ drand(epochNumber)
\If{$\vrfverify_\pk(\seed,y,p) ==0$ or $y> \target$}\Comment{Check the election proof}
\State return 0
\EndIf
\If{$\textsf{isPayloadValid}$(payload)==0}\Comment{Check the payload}
\State return 0
\EndIf
\If{$\textsf{isStorageValid(WinningPost)}==0$}\Comment{Check the storage proof}
\State return 0
\EndIf
\For{$\block_i\in\tipset$}\Comment{Check validity of parent blocks}
\If{$\textsf{isValid}(\block_i)==0$}
\State return 0
\EndIf
\EndFor
\State return 1
\end{algorithmic}
\caption{$\textsf{isValid}(\block)$}\label{pseudocode:block-validation}
\end{algorithm*}

\iffcsubmission{
\section{Proofs}
\subsection{Proof of Lemma~\ref{lem:converge}}\label{app:prooflemconverge}
\prooflemconverge

\subsection{Proof of Lemma~\ref{lem:infinite_many_G}}\label{app:proofprobanakamoto}
\proofprobanakamoto

\subsection{Proof of Lemma~\ref{lem:round}}\label{app:proofwaitingtime}
\proofwaitingtime

\subsection{Proof of Lemma~\ref{lem:round1}}\label{app:prooftightexponent}
\prooftightexponent}\fi

\section{Concentration Inequalities}
\label{app:ineq}

\begin{lemma}[Chernoff]
\label{lem:chernoff}
Let $X = \sum_{i=1}^{n} X_i$, where $X_i = 1$ with probability $p_i$ and $X_i = 0$ with probability $1 - p_i$, and all $X_i$'s are independent. Let $\mu = \mathbb{E}[X] = \sum_{i=1}^{n} p_i$. Then for $0 < \delta <1$, $\mathbb{P} (X > (1+\delta)\mu) < e^{-\Omega(\delta^2 \mu)}$ and $\mathbb{P} (X < (1-\delta)\mu) < e^{-\Omega(\delta^2 \mu)}$.

\end{lemma}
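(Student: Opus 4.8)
The plan is to prove both tail bounds by the exponential moment method (the Chernoff--Cram\'er technique), handling the upper and lower tails separately by the same recipe, and then converting the resulting expressions into the claimed $e^{-\Omega(\delta^2\mu)}$ form via two elementary scalar inequalities.

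For the upper tail, I would fix a parameter $t > 0$ and apply Markov's inequality to the nonnegative random variable $e^{tX}$, obtaining $\mathbb{P}(X > (1+\delta)\mu) \le e^{-t(1+\delta)\mu}\,\mathbb{E}[e^{tX}]$. Using independence of the $X_i$ together with $1 + x \le e^x$, I would compute $\mathbb{E}[e^{tX}] = \prod_{i=1}^{n}\bigl(1 + p_i(e^t - 1)\bigr) \le \exp\bigl(\mu(e^t - 1)\bigr)$. Substituting and choosing the free parameter at $t = \ln(1+\delta) > 0$ yields $\mathbb{P}(X > (1+\delta)\mu) \le \bigl(e^{\delta}(1+\delta)^{-(1+\delta)}\bigr)^{\mu}$. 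The lower tail is symmetric: for $t > 0$, $\mathbb{P}(X < (1-\delta)\mu) = \mathbb{P}(e^{-tX} > e^{-t(1-\delta)\mu}) \le e^{t(1-\delta)\mu}\,\mathbb{E}[e^{-tX}] \le \exp\bigl(\mu(e^{-t}-1) + t(1-\delta)\mu\bigr)$, and taking $t = -\ln(1-\delta) > 0$ (valid since $0 < \delta < 1$) gives $\mathbb{P}(X < (1-\delta)\mu) \le \bigl(e^{-\delta}(1-\delta)^{-(1-\delta)}\bigr)^{\mu}$.

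To finish, I would establish the two analytic inequalities $e^{\delta}(1+\delta)^{-(1+\delta)} \le e^{-\delta^2/3}$ and $e^{-\delta}(1-\delta)^{-(1-\delta)} \le e^{-\delta^2/2}$, both valid for $0 < \delta < 1$. Each follows by taking logarithms and checking that the functions $\varphi_+(\delta) = \delta - (1+\delta)\ln(1+\delta) + \delta^2/3$ and $\varphi_-(\delta) = -\delta - (1-\delta)\ln(1-\delta) + \delta^2/2$ are nonpositive on $(0,1)$, which can be read off from their Taylor expansions (both vanish to second order at $0$ with the correct sign on the subsequent derivatives). Raising to the power $\mu$ then gives $\mathbb{P}(X > (1+\delta)\mu) \le e^{-\delta^2\mu/3}$ and $\mathbb{P}(X < (1-\delta)\mu) \le e^{-\delta^2\mu/2}$, each of which is $e^{-\Omega(\delta^2\mu)}$, as required.

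There is no real conceptual obstacle here, as this is a classical concentration inequality; the only step needing genuine care is the pair of scalar inequalities in the last paragraph. The cleanest approach is the logarithm-and-Taylor argument above rather than chasing sharp constants, since the statement only asks for the exponent to be $\Omega(\delta^2\mu)$.
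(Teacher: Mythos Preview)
Your proposal is correct and follows the standard moment-generating-function (Chernoff--Cram\'er) argument. The paper itself does not supply a proof of this lemma at all: it simply states the Chernoff bound in Appendix~\ref{app:ineq} as a well-known concentration inequality and invokes it without justification, so there is no paper-side proof to compare against. Your derivation is exactly the textbook route and would serve perfectly well as the omitted proof.
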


\begin{lemma}[Poisson]
\label{lem:poisson}
Let $X$ be a Poisson random variable with rate $\mu$. Then for $0 < \delta <1$, $\mathbb{P} (X > (1+\delta)\mu) < e^{-\Omega(\delta^2 \mu)}$ and $\mathbb{P} (X < (1-\delta)\mu) < e^{-\Omega(\delta^2 \mu)}$.

\end{lemma}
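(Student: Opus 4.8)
The plan is to prove both tail bounds by the exponential-moment (Chernoff) method, using the explicit moment generating function of a Poisson law: if $X\sim\mathrm{Poisson}(\mu)$ then $\mathbb{E}[e^{tX}]=e^{\mu(e^{t}-1)}$ for every real $t$. For the upper tail I would fix $t>0$ and apply Markov's inequality to $e^{tX}$, which gives
$$\mathbb{P}\bigl(X>(1+\delta)\mu\bigr)\le e^{-t(1+\delta)\mu}\,\mathbb{E}\bigl[e^{tX}\bigr]=\exp\!\bigl(\mu(e^{t}-1-t(1+\delta))\bigr).$$
Optimizing the exponent over $t$ leads to the choice $t=\ln(1+\delta)>0$, and substituting yields
$$\mathbb{P}\bigl(X>(1+\delta)\mu\bigr)\le \exp\!\bigl(\mu(\delta-(1+\delta)\ln(1+\delta))\bigr).$$

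For the lower tail I would run the symmetric argument with a negative parameter. Fixing $t<0$ and applying Markov's inequality to $e^{tX}$ gives
$$\mathbb{P}\bigl(X<(1-\delta)\mu\bigr)\le e^{-t(1-\delta)\mu}\,\mathbb{E}\bigl[e^{tX}\bigr]=\exp\!\bigl(\mu(e^{t}-1-t(1-\delta))\bigr),$$
and since $0<\delta<1$ the choice $t=\ln(1-\delta)<0$ is admissible, producing
$$\mathbb{P}\bigl(X<(1-\delta)\mu\bigr)\le \exp\!\bigl(\mu(-\delta-(1-\delta)\ln(1-\delta))\bigr).$$

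It then remains to bound the two exponents by a negative constant times $\delta^{2}$ on the range $0<\delta<1$. Concretely, I would invoke the elementary single-variable inequalities $\delta-(1+\delta)\ln(1+\delta)\le-\delta^{2}/3$ and $-\delta-(1-\delta)\ln(1-\delta)\le-\delta^{2}/2$, each of which follows from Taylor-expanding $\ln(1\pm\delta)$ about $0$ (equivalently, from checking that the difference of the two sides vanishes at $\delta=0$ and is monotone on $(0,1)$). Plugging these in shows each tail probability is at most $e^{-c\delta^{2}\mu}$ with $c\in\{1/3,1/2\}$, that is, $e^{-\Omega(\delta^{2}\mu)}$, which is the claim.

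An alternative route, which I would mention but not carry out, is to use that $\mathrm{Poisson}(\mu)$ is the weak limit of $\mathrm{Binomial}(n,\mu/n)$ as $n\to\infty$ and that the bounds of Lemma~\ref{lem:chernoff} for these binomials are uniform in $n$ (the implicit constant in the $\Omega$ depending on $\delta$ only), so taking $n\to\infty$ transfers them to the Poisson case. There is no genuine obstacle in either approach; the only steps requiring any care are the optimization of $t$ and the verification of the two logarithmic inequalities, both of which are standard and can be dispatched in a line each.
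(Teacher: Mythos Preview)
Your argument is correct and is the standard Chernoff/exponential-moment derivation of the Poisson tail bounds. Note, however, that the paper itself does not supply a proof of this lemma: it is listed in the appendix on concentration inequalities as a standard fact (alongside the Chernoff bound for sums of Bernoulli variables) and is simply invoked where needed. So there is no ``paper's own proof'' to compare against; your write-up fills in exactly the routine calculation the authors chose to omit.
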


\section{Proof of Lemma~\ref{lem:round1}}\label{app:prooftightexponent}


\begin{proof}
By Lemma~\ref{lem:round}, we know that the statement is true for $\epsilon = 1/2$ (and hence for $\epsilon > 1/2$). Now we prove for any $\epsilon >0$ by recursively applying the bootstrapping procedure in Lemma~\ref{lem:round}. We use induction to show that the statement is true for all $\epsilon \geq m_n$, where $m_n = \frac{1}{n+1}, n\geq 1$. As a base step, this is true for $n = 1$. Now we assume it holds for $n \geq 1$ and prove for the step $n+1$.

Divide $(j,j+k]$ into $k^{\frac{1}{n+2}}$ sub-intervals of length $k^{\frac{n+1}{n+2}}$ (assuming $k^{\frac{1}{n+2}}$ is a integer), so that the $i$-th sub-interval is:
$$\mathcal{J}_i : = [j+1 +  (i-1) k^{\frac{n+1}{n+2}}, j+ i k^{\frac{n+1}{n+2}}].$$

Now look at the first, fourth, seventh, etc sub-intervals, i.e. all the $i = 1 \mod 3$ sub-intervals. Introduce the event that in the $\ell$-th ($1 \mod 3$) sub-interval ($\mathcal{J}_{3\ell+1}$), an pure adversarial chain that is rooted at a honest block (or more accurately a tipset including at least one honest block) mined in that sub-interval ($\mathcal{J}_{3\ell+1}$) or in the previous ($0 \mod 3$) sub-interval  ($\mathcal{J}_{3\ell}$) catches up with a honest block in that sub-interval ($\mathcal{J}_{3\ell+1}$) or in the next ($2 \mod 3$) sub-interval  ($\mathcal{J}_{3\ell+2}$). 
Formally,
$$C_{\ell}=\bigcap_{s \in \mathcal{J}_{3\ell+1}}
\bigcup_{(r,t): r \in \mathcal{J}_{3\ell} \cup \mathcal{J}_{3\ell+1}, r \leq s-2, t \geq s, t \in \mathcal{J}_{3\ell+1} \cup \mathcal{J}_{3\ell+2} } \hat{B}_{rt} \cup U_s^c.$$
Note that for distinct $\ell$, the events $C_\ell$'s  are independent since $\hat{B}_{rt}$'s in different $C_\ell$'s do not have overlap. Also by the previous induction step, we have
\begin{eqnarray}
    \label{eqn:short_range1}
    P(C_{\ell}) &\leq& P(\mbox{no Nakamoto epoch in $\mathcal{J}_{3\ell+1}$}) \nonumber\\
    &\leq& A_{m_n} \exp(-\alpha_{m_n} {(k^{\frac{n+1}{n+2}})}^{1-m_n}) \nonumber\\
    &=& A_{m_n} \exp(-\alpha_{m_n} k^{\frac{n}{n+2}}).
\end{eqnarray}

Introduce the atypical events:
\begin{eqnarray*}
    B &=& \bigcup_{(r,t): r \in (j,j+k] \mbox{~or~} t \in (j,j+k], r <t, t - r \geq  k^{\frac{n+1}{n+2}}} \hat{B}_{rt} \;, \text{ and }\\
    \tilde{B} &=& 
    \bigcup_{(r,t):r\leq j,j+k<t} \hat{B}_{rt}\;.
\end{eqnarray*}
The events $B$ and $\tilde{B}$ are the events that an adversarial chain catches up with an honest block far ahead (more than $k^{\frac{n+1}{n+2}}$ epochs). Following the calculations in Lemma~\ref{lem:round}, we have
\begin{eqnarray*}
&~&P(B) \\
&\leq& \sum_{(r,t): r \in [j+1,j+k] \mbox{~or~} t \in [j+1,j+k], r < t, t - r \geq k^{\frac{n+1}{n+2}}} A_1 e^{-\alpha_1 \varepsilon^2(t-r)} \\
&\leq& \sum_{r=j+1}^{j+k} \big( \sum_{t=r+k^{\frac{n+1}{n+2}}}^{\infty} A_1 e^{-\alpha_1 \varepsilon^2(t-r)} \big)  + \sum_{t=j+1}^{j+k} \big( \sum_{r=0}^{t-k^{\frac{n+1}{n+2}}} A_1 e^{-\alpha_1 \varepsilon^2(t-r)} \big) \\
&\leq& 2k\frac{A_1 e^{-\alpha_1 \varepsilon^2 k^{\frac{n+1}{n+2}}}} {1-e^{-\alpha_1 \varepsilon^2}},
\end{eqnarray*}
and
\begin{eqnarray*}
P(\tilde{B}) &\leq& \sum_{(r,t):r\leq j,t>j+k} A_1 e^{-\alpha_1 \varepsilon^2(t-r)} \\
&\leq& \sum_{r=0}^{j} \big( \sum_{t=j+k+1}^{\infty} A_1 e^{-\alpha_1 \varepsilon^2(t-r)} \big)\\
&=& \sum_{r=0}^{j} \frac{A_1 e^{-\alpha_1 \varepsilon^2 (j+k+1-r)}}{1-e^{-\alpha_1 \varepsilon^2}} \\
&\leq& \frac{A_1 e^{-\alpha_1 \varepsilon^2 (k+1)}}{(1-e^{-\alpha_1 \varepsilon^2})^2}.
\end{eqnarray*}

Now, we have
\begin{eqnarray}
\label{eqn:short_long1}
&&q(j,j+k] \nonumber\\
&\leq& P(\mbox{no Nakamoto epoch in $\bigcup_{\ell=0}^{k^{\frac{1}{n+2}}/3} \mathcal{J}_{3\ell+1}$}) \nonumber \\
&\leq& P(\mbox{no isolated successful epoch in $\bigcup_{\ell=0}^{k^{\frac{1}{n+2}}/3} \mathcal{J}_{3\ell+1}$}) + P(B) + P(\tilde{B}) + P(\bigcap_{\ell=0}^{k^{\frac{1}{n+2}}/3} C_{\ell}) \nonumber\\
&=& e^{-\Omega(k)} + P(B)+P(\tilde{B}) + (P(C_{\ell}))^{k^{\frac{1}{n+2}}/3} \label{eqn:ind1}\\
&\leq& e^{-\Omega(k)} + 2k\frac{A_1 e^{-\alpha_1 \varepsilon^2 k^{\frac{n+1}{n+2}}}} {1-e^{-\alpha_1 \varepsilon^2}} + \frac{A_1 e^{-\alpha_1 \varepsilon^2 (k+1)}}{(1-e^{-\alpha_1 \varepsilon^2})^2} + (P(C_{\ell}))^{k^{\frac{1}{n+2}}/3} \nonumber \\
&\leq&  A_{m_{n+1}} \exp(- \alpha_{m_{n+1}} k^{\frac{n+1}{n+2}}) \label{eqn:union1} \\
&=& A_{m_{n+1}} \exp(- \alpha_{m_{n+1}} k^{1-m_{n+1}}) \nonumber
\end{eqnarray}
for some positive constants $A_{m_{n+1}}$ and $\alpha_{m_{n+1}}$. Equality~\eqref{eqn:ind1} is due to the independence of $C_\ell$'s and inequality~\eqref{eqn:union1} is due to \eqref{eqn:short_range1}.

So we know that the statement holds for all $\epsilon \geq m_{n+1} = \frac{1}{n+2} $. And since $\lim_{n\rightarrow\infty} m_n = 0$, this concludes the lemma.

\end{proof}

\iffcsubmission{
\section{Attack discussion}\label{sec:attack-discussion}
We discuss some practical aspects of the $n$-split attack described in Section~\ref{sec:attack}.
\attackdiscussion}\fi

\end{document}